\newtheorem{lemma}{Lemma}
\newtheorem{theorem}{Theorem}
\newtheorem{conjecture}{Conjecture}
\newtheorem{proposition}{Proposition}
\theoremstyle{definition}
\newtheorem{definition}{Definition}
\newcommand{\eps}{\varepsilon}
\def\poly{\operatorname{poly}}
\newcommand{\ceil}[1]{\left\lceil{#1}\right\rceil}
\newcommand{\floor}[1]{\left\lfloor{#1}\right\rfloor}
\def\deg{\operatorname{deg}}
\newcommand{\abs}[1]{\left | #1 \right |}
\newcommand{\R}{\mathbb{R}}
\def\supp{\operatorname{supp}}
\newcommand{\norm}[1]{\left \lVert #1 \right \rVert}
\newcommand{\snorm}[1]{\norm{#1}_{\phi}}
\newcommand{\set}[1]{\left \{ #1 \right \}}
\title{Finding Even Cycles Faster via Capped $k$-Walks}
\author{Søren Dahlgaard\thanks{Research partly supported by Advanced Grant
DFF-0602-02499B from the Danish Council for Independent Research}}
\author{Mathias Bæk Tejs Knudsen\thanks{Research partly supported by Advanced
Grant DFF-0602-02499B from the Danish Council for Independent and the FNU
project AlgoDisc -- Discrete Mathematics, Algorithms, and Data Structures.}}
\author{Morten Stöckel\thanks{Research partly supported by Villum Fonden and
the DABAI project.}}
\affil{University of Copenhagen\\\texttt{[soerend,knudsen,most]@di.ku.dk}}
\date{}
\begin{document}

\maketitle
\begin{abstract}
    Finding cycles in graphs is a fundamental problem in algorithmic graph
    theory. In this paper, we consider the problem of finding and reporting a
    cycle of length $2k$ in an undirected graph $G$ with $n$ nodes and $m$ edges for
    constant $k\ge 2$. A classic result by Bondy and Simonovits [J.
    Combinatorial Theory, 1974] implies that if $m \geq 100k n^{1+1/k}$, then
    $G$ contains a $2k$-cycle, further implying that one needs to consider only
    graphs with $m = O(n^{1+1/k})$.

    Previously the best known algorithms were an $O(n^2)$ algorithm due to
    Yuster and Zwick [J. Discrete Math 1997] as well as a
    $O(m^{2-(1+\ceil{k/2}^{-1})/(k+1)})$ algorithm by Alon et. al.
    [Algorithmica 1997].

    We present an algorithm that uses $O\left ( m^{2k/(k+1)} \right )$ time and
    finds a $2k$-cycle if one exists. This bound is $O(n^2)$ exactly when $m =
    \Theta(n^{1+1/k})$. When finding $4$-cycles our new bound coincides with
    Alon et. al., while for every $k>2$ our new bound yields a polynomial
    improvement in $m$.

    Yuster and Zwick noted that it is ``plausible to conjecture that $O(n^2)$
    is the best possible bound in terms of $n$''. We show ``conditional
    optimality'': if this hypothesis holds then our $O(m^{2k/(k+1)})$ algorithm
    is tight as well. Furthermore, a folklore reduction implies that no
    \emph{combinatorial} algorithm can determine if a graph contains a
    $6$-cycle in time $O(m^{3/2-\eps})$ for any $\eps>0$ unless boolean matrix
    multiplication can be solved combinatorially in time $O(n^{3-\eps'})$ for
    some $\eps' > 0$, which is widely believed to be false. Coupled with our
    main result, this gives tight bounds for finding $6$-cycles combinatorially
    and also separates the complexity of finding $4$- and $6$-cycles giving
    evidence that the exponent of $m$ in the running time should indeed
    increase with $k$.

    The key ingredient in our algorithm is a new notion of \emph{capped
    $k$-walks}, which are walks of length $k$ that visit only nodes according
    to a fixed ordering. Our main technical contribution is an involved
    analysis proving several properties of such walks which may be of
    independent interest.
\end{abstract}

\section{Introduction}
We study a basic problem in algorithmic graph theory.
Namely, given an undirected and unweighted graph $G=(V,E)$ and an
integer $\ell$, does $G$ contain a cycle of length exactly $\ell$ (denoted
$C_\ell$)? If a $C_\ell$ exists, we would also like the algorithm to return
such a cycle.
As a special case, when $\ell = n$ is the number of nodes in the graph, we
are faced with the well-known problem of finding a hamiltonian cycle, which was
one of Karp's original 21 NP-complete problems \cite{Karp1972}.
In fact, the problem is NP-complete when $\ell = n^{\Omega(1)}$.

On the other end of the spectrum, when $\ell = O(1)$ is a constant, the
problem is in FPT\footnote{Informally, a problem of size $n$ parameterized
by $k$ is in FPT if it can be solved in time $f(k)\cdot n^{O(1)}$, where $f$ is
a function independent of $n$.}
as first shown by Monien in
1985~\cite{Monien85}, by giving an $O(f(\ell)\cdot m)$ algorithm to
determine if any given node $u$ is contained in a $C_{\ell}$. For $\ell = 3$,
this is the classical problem of triangle-finding, which can be done in
$O(n^\omega)$ time using matrix multiplication, where $\omega < 2.373$ is the
matrix multiplication exponent~\cite{LeGall:2014:PTF:2608628.2608664}. This can
be extended to finding a $C_\ell$ for any constant $\ell=O(1)$ in time
$O(n^\omega)$ expected and $O(n^\omega\log n)$
deterministically~\cite{AlonYZ95}. When $\ell$ is odd, this is the fastest
known algorithm, however for even $\ell = 2k = O(1)$ one can do better.
To appreciate the difference, we must first understand
the following basic graph theoretic result about even cycles: Bondy and
Simonovits~\cite{BONDY197497} showed that if a graph with $n$ nodes has more
than $100k n^{1+1/k}$ edges, then the graph contains a $C_{2k}$.
In contrast, a graph on $n$ nodes can have $\Theta(n^2)$ edges without containing
any odd cycle, e.g. $K_{\floor{n/2},\ceil{n/2}}$. Using this lemma of Bondy
and Simonovits, it was shown by Yuster and Zwick~\cite{YusterZ97} how to find a
$C_{2k}$ for constant $k$ in time $O(n^2)$. They note that \emph{``it seems
plausible to conjecture that $O(n^2)$ is the best possible bound in terms of
$n$''}. Furthermore, when $m\ge 100k\cdot n^{1+1/k}$
we can use the algorithm of Yuster Zwick~\cite{YusterZ97} to find a $C_{2k}$ in
$O(n)$ expected time.
Given this situation, we seek an algorithm with a running time
$O(m^{c_k})$, which utilizes the sparseness of the graph, when $m$ is less than
$100k\cdot n^{1+1/k}$. By the above discussion, such an algorithm can
be turned into a $O(n^{c_k(1+1/k)})$ time algorithm for finding a
$C_{2k}$.
Therefore, if we believe that $O(n^2)$ indeed is the correct running time in
terms of $n$, we must also believe that the best possible value for $c_k$
is $2 - 2/(k+1)$. This is further discussed in Section~\ref{sec:hard_intro}
below.
Our main result is to present an algorithm which obtains exactly this
running time in terms of $m$ and $k$ for finding a $C_{2k}$. We show the
following.
\begin{theorem}\label{thm:2k_cycle}
    Let $G$ be an unweighted and undirected graph with $n$ nodes and $m$ edges, and let $k \geq 2$ be a positive integer.
    A $C_{2k}$ in $G$, if one exists, can be found in $O(k^{O(k)}m^{\frac{2k}{k+1}})$.
\end{theorem}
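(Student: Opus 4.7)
The plan is to first dispose of the dense regime by invoking Bondy--Simonovits: whenever $m \geq 100k\,n^{1+1/k}$ a $C_{2k}$ is guaranteed to exist and can be found by the Yuster--Zwick procedure in $O(n)$ expected time, which is already within $O(m^{2k/(k+1)})$. I may therefore restrict attention to $m = O_k(n^{1+1/k})$, equivalently $n \geq \Omega_k(m^{k/(k+1)})$. This is exactly the regime in which the target bound $m^{2k/(k+1)}$ is meaningful, since in this regime $m \cdot (m/n)^{k-1} = O_k(m^{2k/(k+1)})$.

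The core algorithmic template is the standard two-paths-meet-at-a-vertex idea for even cycles: for each source $v$, enumerate a structured collection of length-$k$ walks from $v$, bucket them by their endpoint, and whenever a bucket for endpoint $u$ contains two walks that are internally vertex-disjoint, concatenate them and report the resulting $C_{2k}$. The technical novelty is that this enumeration is restricted to \emph{capped $k$-walks}, i.e.\ walks whose vertex sequence respects a fixed total order on $V$ (the natural choice being degree-decreasing order with randomised tie-breaking). The ordering is chosen once and for all before the enumeration begins, and the capping constraint must be set up so that the total number of capped $k$-walks, summed over all sources, is $O_k(m^{2k/(k+1)})$.

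For correctness, I would take an arbitrary $C_{2k}$ in $G$ and show that, under the capping rule, there is a canonical choice of source $v$ on the cycle (for instance the vertex of highest rank) and an antipodal vertex $u$ such that both length-$k$ sub-paths of the cycle from $v$ to $u$ are capped walks with respect to the chosen ordering. Then when the algorithm reaches $v$ as source, the bucket for endpoint $u$ will contain the two halves of the cycle, and the algorithm's internally disjoint-walk detector will succeed. The only non-triviality here is verifying that the capping constraint, once designed to make the enumeration efficient, is permissive enough to admit both halves of any cycle.

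The hard part, which the abstract itself flags as the paper's main technical contribution, will be the structural analysis of capped walks. Two things must be extracted from it simultaneously and are in tension: (a) the cumulative count of capped $k$-walks is $O_k(m \cdot (m/n)^{k-1})$, so that the enumeration itself meets the $m^{2k/(k+1)}$ budget, and (b) if a bucket at a single terminal $u$ contains many capped walks from the same source $v$, one can actually extract two that are internally vertex-disjoint — this does not follow from pigeonhole alone, because the walks may overlap heavily on their inner vertices. I would expect the proof of (b) to proceed by induction on $k$, possibly recursing on the subgraph induced by an initial segment of the ordering, with each level of recursion contributing a constant factor that accumulates to the final $k^{O(k)}$ overhead. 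Getting (a) and (b) out of the same capping rule simultaneously is where the whole argument stands or falls.
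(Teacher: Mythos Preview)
Your identification of capped $k$-walks as the central object, and your correctness observation that the highest-ranked vertex on a $C_{2k}$ sources two capped halves of the cycle, are both right. But the algorithm you sketch and its analysis diverge from the paper in two ways that leave real gaps.

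First, the paper never enumerates capped walks and buckets them by endpoint. It processes $v_1,\ldots,v_n$ in non-decreasing degree order; for each $v_i$ it builds the graph $G_{\le i}^k$ consisting of all edges within distance $k$ of $v_i$ in the subgraph induced by $\{v_1,\ldots,v_i\}$, and then runs a \emph{black-box} detector (Monien, or color-coding) that decides in time linear in $|E(G_{\le i}^k)|$ whether $v_i$ lies on a $C_{2k}$. Capped walks enter only as an \emph{accounting device}: $|E(G_{\le i}^k)|$ is at most the number of capped $k$-walks from $v_i$, so the total work is bounded by the total number of capped $k$-walks. This completely sidesteps your issue~(b). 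There is no need to extract two internally disjoint walks from a bucket, and the structural claim you float (``many walks to one endpoint $\Rightarrow$ two internally disjoint'') is false in general and is not what the argument uses.

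Second, your bound $O_k\bigl(m\cdot(m/n)^{k-1}\bigr)$ on the number of capped $k$-walks, under only the Bondy--Simonovits density restriction, is false. The star $K_{1,n-1}$ has $m=n-1$, contains no $C_{2k}$, yet has $\Theta\bigl((n-1)^{\lceil k/2\rceil}\bigr)$ capped $k$-walks from the center for every $k\ge 3$, far exceeding your claimed $O_k(n)$. The paper's actual bound (Lemma~\ref{lem:main_kwalks}) requires \emph{both} that $G$ has no $C_{2k}$ \emph{and} that the maximum degree is at most $m^{2/(k+1)}$. The degree hypothesis is arranged automatically by the ordering: in $G_{\le i}$ every vertex has degree at most $\deg(v_i)$, so as long as $\deg(v_i)\le m^{2/(k+1)}$ the lemma applies (and until a cycle is found, the no-$C_{2k}$ hypothesis holds in $G_{\le i}$ as well). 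The at most $m^{1-2/(k+1)}$ remaining high-degree vertices are each charged the trivial bound $|E(G_{\le i}^k)|\le m$, contributing a further $O(m^{2k/(k+1)})$. Your Bondy--Simonovits preprocessing is thus neither used nor sufficient: it does not supply the max-degree hypothesis the key lemma actually needs.
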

Theorem~\ref{thm:2k_cycle} presents the first improvement in more than 20 years
over a result of Alon, et al.~\cite{AlonYZ97}, who gave an algorithm with
$c_k = 2 - (1 + \frac{1}{\ceil{k/2}})/(k+1)$, i.e.,
a running time of $O(m^{4/3})$ for $4$-cycles and $O(m^{13/8})$ for $6$-cycles.
For $4$-cycles we obtain the same bound with Theorem~\ref{thm:2k_cycle}, but
for any $k > 2$ our new bound presents a polynomial improvement. In fact
our algorithm for finding a $C_8$ is faster than the algorithm of Alon, et al.
for finding a $C_6$. A comparison with known algorithms is shown below in
Figure~\ref{fig:comparison}.
\begin{figure}[htbp]
    \centering
    \includegraphics[width=0.6\columnwidth]{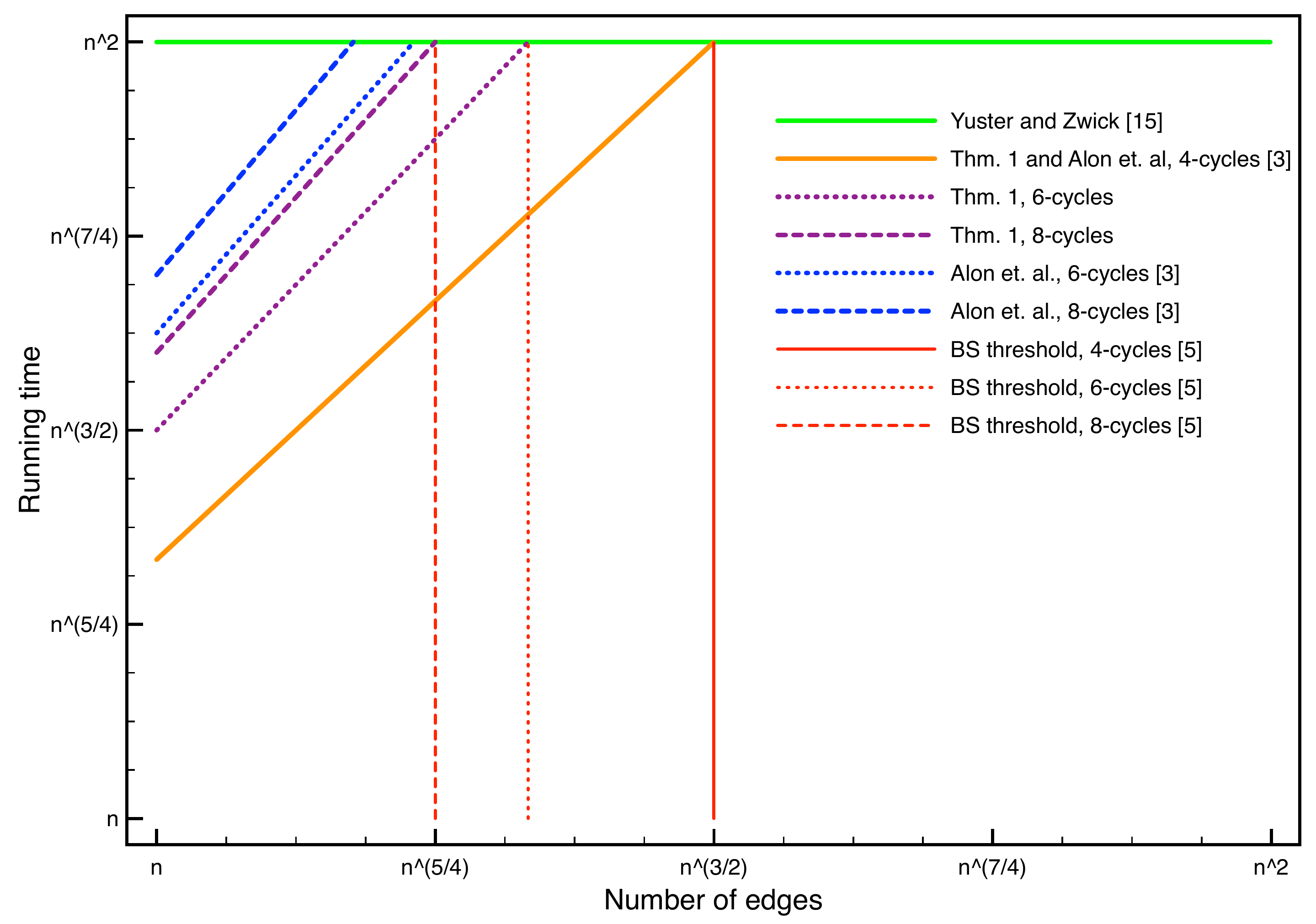}
    \caption{Comparisons of running times in terms of graph density. The
    illustration shows our algorithm from \Cref{thm:2k_cycle} compared to
    \cite{YusterZ97} and \cite{AlonYZ97}, and shows that it uses quadratic time
    exactly when the threshold from Bondy and Simonovits ensures the existence
    of a $2k$-cycle.}
    \label[figure]{fig:comparison}
\end{figure}

We present our algorithm as a black box reduction: Let $A$ be any algorithm
which can determine \emph{for a given node} $u$ if $u$ is contained in a
$C_{2k}$ in $O(f(k)\cdot m)$ time. Then our algorithm can transform $A$ into an
algorithm which finds a $C_{2k}$ in $O(g(k)\cdot m^{2k/(k+1)})$ time. Thus, one
may pick any such algorithm $A$ such as the original algorithm of
Monien~\cite{Monien85} or the seminal color-coding algorithm of Alon et
al.~\cite{AlonYZ95}. Our algorithm is conceptually simple, but the analysis is
technically involved and relies on a new understanding of the relationship
between the number of $k$-walks and the existence of a $C_{2k}$. By
introducing the notion of \emph{capped $k$-walks}, we show that an algorithm
enumerating all such capped $k$-walks starting in nodes with low degree will
either find a $2k$-cycle or spend at most $O(m^{2k/(k+1)})$ time. In some sense
this is a stronger version of the combinatorial lemma by Bondy and Simonovits,
as any graph with many edges must also have many capped $k$-walks.

\subsection{Hardness of finding cycles}\label{sec:hard_intro}
The literature on finding $\ell$-cycles is generally split into two kinds of
algorithms: \emph{combinatorial} and \emph{non-combinatorial} algorithms. Where
combinatorial algorithms (informally) are algorithms, which do not use the
structure of the underlying field and perform Strassen-like cancellation
tricks~\cite{Strassen69}. Interestingly, all known algorithms for finding
cycles of even length efficiently are combinatorial. There are several possible
explanations for this. One is that the hard instance for even cycles are
graphs, which are relatively sparse (i.e. $O(n^{1+1/k})$ edges), and in this
case it is difficult to utilize the power of fast matrix-multiplication.
Another is that matrix-multiplication based methods allows one to solve the
harder problem of directed graphs. Directed graphs are harder because we can no
longer make the guarantee that a $C_{2k}$ can always be found if the graph is
dense. Furthermore, a simple argument shows that the problem of finding a
$C_3$ can be reduced to the problem of finding a directed $C_\ell$ for any
$\ell > 3$. Especially this problem of finding a $C_3$ combinatorially has been
studied thoroughly in the line of work colloquially referred to as
\emph{Hardness in \textbf{P}}. This line of work is concerned with basing
hardness results on widely believed conjectures about problems in \textbf{P}
such as 3-SUM and APSP. One such popular conjecture (see
e.g.~\cite{AbboudW14,WilliamsW10}) is the combinatorial boolean matrix
multiplication (BMM) conjecture stated below.

\begin{conjecture}\label{conj:comb_bmm}
    There exists no \emph{combinatorial} algorithm for multiplying two $n\times
    n$ boolean matrices in time $O(n^{3-\eps})$ for any $\eps > 0$.
\end{conjecture}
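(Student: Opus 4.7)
The final statement is Conjecture 1, the combinatorial Boolean matrix multiplication (BMM) hypothesis. As a \emph{conjecture}, it is an unproven assumption rather than a theorem, so there is no proof to sketch in the usual sense: settling it in the positive direction would be a landmark result in fine-grained complexity, and settling it in the negative direction would require a genuinely new combinatorial algorithm. Accordingly, my ``proof proposal'' is really a plan for how one would try to argue \emph{for} the conjecture, together with an explanation of why the central step is expected to be the principal obstacle.

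The plan is to proceed in three stages. First, I would try to fix a formal computational model that captures what ``combinatorial'' should mean, since the conjecture as stated is semantic. A natural candidate is a model in which the algorithm manipulates the input matrices only via coordinate-wise Boolean operations (AND, OR, NOT) on stored bit-vectors, together with arbitrary pointer/word-RAM bookkeeping, but is forbidden from exploiting ring cancellations over $\mathbb{Z}$ or $\mathbb{F}_2$ (as used in Strassen-like algorithms). Second, within this model I would try to prove an $\Omega(n^{3-o(1)})$ lower bound, most plausibly via a reduction-based or communication-complexity argument: start from a problem whose combinatorial complexity is comparatively well-understood (such as set disjointness, triangle enumeration, or $3$-SUM) and show that a $O(n^{3-\eps})$ combinatorial BMM algorithm would collapse the assumed lower bound for that problem. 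Third, to make the reduction robust, I would invoke the known combinatorial equivalences between BMM, triangle detection, and a host of other graph problems to show that the hypothesis is stable under reformulation, strengthening the plausibility claim even if no absolute lower bound is obtained.

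The hard part, by a wide margin, will be the second stage. We currently have no lower bound technique that can rule out $n^{3-\eps}$ algorithms for BMM in any model rich enough to contain, say, the Four Russians method or the Bansal--Williams $n^3/\polylog n$ speedup. Any formal model tight enough to prove the conjecture risks being so restrictive that it excludes these known algorithms as well, making the resulting lower bound uninteresting; any model loose enough to admit all existing combinatorial algorithms has, so far, resisted every known lower-bound method. This is precisely why the statement is presented here as a conjecture rather than a theorem, and why the paper (and the fine-grained complexity literature it draws on) uses it only as a \emph{hypothesis} from which to derive conditional hardness results, such as the $\Omega(m^{3/2-\eps})$ lower bound for combinatorially finding $C_6$ mentioned earlier in the introduction.

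In short, my ``proposal'' is to acknowledge that no unconditional proof is known or expected with current techniques, and to articulate that the most promising route — a formal model plus a reduction-based lower bound — founders on our persistent inability to prove any super-$\Omega(n^2)$ lower bound for matrix-like problems outside of highly restricted models. Any genuine progress on the conjecture would have to overcome that barrier first.
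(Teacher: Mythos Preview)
Your assessment is correct: the statement is a conjecture, the paper offers no proof of it, and it is used solely as a hypothesis from which conditional hardness results (notably \Cref{thm:clbs}) are derived. There is nothing to compare against, and your explanation of why no proof is expected matches the paper's own treatment.
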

It is known from \cite{WilliamsW10} that Conjecture~\ref{conj:comb_bmm} above
is equivalent to the statement that there exists no truly subcubic\footnote{An
algorithm running polynomially faster than cubic time, i.e. $O(n^{3-\eps})$ for
$\eps > 0$.} \emph{combinatorial} algorithm for finding a $C_3$ in graphs with
$n$ nodes and $\Theta(n^2)$ edges, and a simple reduction shows that this holds
for any odd $\ell \ge 3$. For even cycles, we show that a simple extension
to this folklore reduction gives the following result.
\begin{proposition}\label[proposition]{thm:clbs}
    Let $k \ge 3$ be a fixed integer with $k\ne 4$. Then there exists no
    \emph{combinatorial} algorithm that can find a $2k$-cycle in graphs with
    $n$ nodes and $m$ edges in time $O(m^{3/2-\eps})$ unless
    Conjecture~\ref{conj:comb_bmm} is false.
\end{proposition}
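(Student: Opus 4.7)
The plan is to reduce combinatorial triangle finding in tripartite graphs with $\Theta(n^2)$ edges---which is equivalent to Conjecture~\ref{conj:comb_bmm} by~\cite{WilliamsW10}---to the problem of finding a $C_{2k}$. Given such a reduction, a combinatorial $O(m^{3/2 - \eps})$-time algorithm for $C_{2k}$ applied to a graph $G'$ with $m' = \Theta(n^2)$ edges would solve triangle in $O(n^{3 - 2\eps})$ time, contradicting the conjecture.

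First, I would take a tripartite instance $G = (A \cup B \cup C, E_{AB} \cup E_{BC} \cup E_{AC})$ with $|A| = |B| = |C| = n$ and $m = \Theta(n^2)$ edges, and construct $G'$ by replacing every edge in $E_{AB}$, $E_{BC}$, $E_{AC}$ by a path of length $a$, $b$, $c$ respectively, where $a, b, c$ are positive integers with $a + b + c = 2k$. Since each edge is replaced by at most $2k = O(1)$ edges, $G'$ still has $\Theta(n^2)$ edges. Any triangle of $G$ then becomes a cycle of length exactly $a + b + c = 2k$ in $G'$.

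The key step is to choose $a, b, c$ so that no other $2k$-cycle appears in $G'$. Since every subdivision vertex in $G'$ has degree exactly $2$, any simple cycle of $G'$ must traverse each subdivided path it touches in full, and hence corresponds to a simple cycle of $G$ whose length in $G'$ equals $xa + yb + zc$, where $x, y, z$ count the $AB$-, $BC$-, and $AC$-edges it uses. A standard double-counting of incidences at each part shows that $x + z$, $x + y$, and $y + z$ are all even, so $x, y, z$ share a common parity. If I choose $a, b, c$ strictly greater than $k/2$, then any cycle with $x + y + z \ge 4$ has weighted length strictly greater than $2k$, while cycles with $x + y + z = 3$ are forced by the parity constraint to satisfy $(x, y, z) = (1, 1, 1)$, i.e.\ to be triangles of $G$. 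This shows that $G'$ contains a $C_{2k}$ iff $G$ contains a triangle.

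The hard part---and the reason $k = 4$ must be excluded---is the arithmetic question of whether integers $a, b, c > k/2$ summing to $2k$ even exist. For odd $k \ge 3$, the choice $a = b = (k+1)/2$, $c = k - 1$ works. For even $k$, one needs $a, b, c \ge k/2 + 1$, giving the constraint $3(k/2 + 1) \le 2k$, i.e.\ $k \ge 6$, realised e.g.\ by $a = b = k/2 + 1$, $c = k - 2$. The only remaining case is $k = 4$, which would require $a + b + c = 8$ with $a, b, c \ge 3$, and is infeasible---precisely the case excluded from the statement. Assembling the reduction with these parameters completes the proof.
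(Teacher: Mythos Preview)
Your proof is correct and follows the same overall strategy as the paper: reduce triangle detection in a tripartite graph with $\Theta(n^2)$ edges to $C_{2k}$-detection by subdividing edges so that triangles become $2k$-cycles and no shorter cycle of $G$ can produce a $C_{2k}$.

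The implementations differ slightly. The paper subdivides \emph{all} edges uniformly by $x=\lceil(2k+1)/4\rceil$ and, when $3x<2k$, absorbs the residual length by replacing each $A$-vertex by a path of length $2k-3x$; the exclusion argument is then simply that any cycle touching two distinct $A$-vertices (or two $B$-, or two $C$-vertices) has length at least $4x>2k$. You instead subdivide the three edge classes by possibly different amounts $a,b,c>k/2$ summing to $2k$, and use a parity/double-counting argument to force any $3$-edge cycle of $G$ to be a genuine triangle. Your version avoids the vertex-to-path gadget and gives a tidy combinatorial explanation of why $k=4$ is the unique obstruction (no integer triple $>2$ sums to $8$), which is exactly the same arithmetic obstruction the paper meets via the inequality $2k\ge 3x$. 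Either route yields the same reduction with $m'=\Theta(n^2)$ and hence the same $m^{3/2-o(1)}$ lower bound.
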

As noted, the proof of \Cref{thm:clbs} is a rather simple extension of the
reduction for odd cycles, but for completeness, we include the proof in
\Cref{sec:clb}. In particular, \Cref{thm:clbs} implies that our $O(m^{3/2})$
time algorithm for finding $6$-cycles is optimal among combinatorial
algorithms. Interestingly, \Cref{thm:clbs} also creates a separation between
finding $4$-cycles and finding larger even cycles, as both Alon, et
al.~\cite{AlonYZ97} and Theorem~\ref{thm:2k_cycle} provide an algorithm for
finding $4$-cycles in time $O(m^{4/3})$., which is polynomially smaller than
$O(m^{3/2})$. This gives evidence that a trade-off dependent on $k$ like the
one obtained in Theorem~\ref{thm:2k_cycle} is indeed necessary.

An important point of \Cref{thm:2k_cycle}, as mentioned earlier, is that it
is optimal if we believe that $\Theta(n^2)$ is the correct running time in
terms of $n$. This is formalized in the theorem below. Furthermore, we show
that \Cref{thm:2k_cycle} implies that any hardness result of $n^{2-o(1)}$ would
provide a link between the time complexity of an algorithm and the existence of
dense graphs without $2k$-cycles. A statement, which is reminiscent of the
Erdős Girth Conjecture.
\begin{theorem}\label{thm:n2opt}
    Let $k\ge 2$ be some fixed integer. For all $\eps > 0$ there exists
    $\delta > 0$ such that if no algorithm exists which
    can find a $C_{2k}$-cycle in graphs with $n$ nodes and $m$ edges in time
    $O(n^{2-\delta})$, then the following two statements
    hold.
    \begin{enumerate}
        \itemsep-2pt
        \item There is no algorithm which can detect if a graph contains a
            $C_{2k}$ in time $O(m^{2k/(k+1) - \eps})$.
        \item There exists an infinite family of graphs $\mathcal{G}$, such
            that each $G\in \mathcal{G}$ has $|E(G)| \ge |V(G)|^{1+1/k-\eps}$
            and contains no $C_{2k}$.
    \end{enumerate}
\end{theorem}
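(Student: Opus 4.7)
The strategy is to prove the contrapositive: for every $\eps > 0$ I exhibit a $\delta > 0$ of the form $\delta = c_k \cdot \eps$ such that if either statement~(1) or~(2) fails, then a cycle-finding algorithm running in $O(n^{2-\delta})$ time exists. Both failure modes use the Bondy--Simonovits density threshold $n^{1+1/k}$ as the pivot between sparse and dense regimes.

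\emph{Case (1) fails.} Suppose an algorithm $A$ detects a $C_{2k}$ in time $O(m^{2k/(k+1)-\eps})$. A standard self-reducibility argument (randomly splitting the edge set in half and recursing, using $O(2^{2k}\log m)$ invocations of $A$) turns $A$ into a finding algorithm of essentially the same exponent, say $O(m^{2k/(k+1)-\eps/2})$. On an input with $n$ nodes and $m$ edges, dichotomise on $m$: when $m \le 100k\cdot n^{1+1/k}$ the bound becomes $O(n^{(1+1/k)(2k/(k+1)-\eps/2)}) = O(n^{2-\eps(k+1)/(2k)})$; when $m > 100k\cdot n^{1+1/k}$, Bondy--Simonovits guarantees a $C_{2k}$ and Yuster--Zwick returns one in $O(n)$ expected time.

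\emph{Case (2) fails.} Fix $\eps$ and assume no infinite $C_{2k}$-free family of density $\ge |V|^{1+1/k-\eps}$ exists; then there is some $N = N(\eps,k)$ such that every graph on $n' \ge N$ vertices with $\ge n'^{1+1/k-\eps}$ edges must contain a $C_{2k}$. Given an input $G$, split on $m$ once more. If $m < n^{1+1/k-\eps/4}$, invoke \Cref{thm:2k_cycle} to obtain running time $O(m^{2k/(k+1)}) = O(n^{2-(\eps/4)\cdot 2k/(k+1)})$. If $m \ge n^{1+1/k-\eps/4}$, sample each vertex independently with probability $p = n^{-\beta}$, where $\beta = \beta(\eps,k) > 0$ is chosen as the smallest value for which $p^2 m \ge (pn)^{1+1/k-\eps/2}$ holds under the assumed lower bound on $m$; a short calculation yields $\beta = \eps/(4-4/k+2\eps)$. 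By Chernoff-type concentration the induced subgraph $G[S]$ has $\Theta(pn)$ vertices and $(1\pm o(1))p^2 m$ edges with high probability, and for $n$ large enough $|S| \ge N$, so by our assumption $G[S]$ contains a $C_{2k}$. Finding one with the $O(n^2)$ algorithm of Yuster--Zwick on $G[S]$ costs $O((pn)^2) = O(n^{2-2\beta})$, and a constant number of independent resamples boosts the success probability arbitrarily close to $1$.

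Setting $\delta$ to the minimum of the positive exponents produced in the two cases yields a uniform $\delta = \Theta_k(\eps)$ that settles the contrapositive. The main technical obstacle I anticipate is the second case: one must choose $\beta$ simultaneously large enough that the sampled graph provably crosses the density threshold $n^{1+1/k-\eps/2}$ with room to spare, yet small enough that $(pn)^2$ remains polynomially below $n^2$; verifying the two-sided concentration for edges in $G[S]$ (which are not independent events, since they share endpoints) also demands some care, but is standard via a bounded-differences or Kim--Vu type argument. Everything else reduces to careful bookkeeping around the Bondy--Simonovits threshold.
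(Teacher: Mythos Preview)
Your contrapositive strategy, pivoting on the Bondy--Simonovits threshold and invoking \Cref{thm:2k_cycle} in the sparse regime, is exactly the paper's approach. The paper's execution is considerably lighter, however: it simply sets $\delta = \eps$, and for part~(2) splits at $m = n^{1+1/k-\eps}$, answering ``yes'' in the dense case (since by assumption every sufficiently large graph of that density contains a $C_{2k}$) and running \Cref{thm:2k_cycle} in the sparse case to get time $O\!\bigl(n^{(1+1/k-\eps)\cdot 2k/(k+1)}\bigr) = o(n^{2-\delta})$. No sampling, no concentration argument, no choice of $\beta$.

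The extra layers you add --- the detection-to-finding self-reduction in part~(1) and the random vertex sampling in part~(2) --- are all in service of the distinction between \emph{detecting} and \emph{finding}, which the paper's proof quietly elides (it assumes a finding algorithm when negating~(1), and only outputs a yes/no answer in the dense branch of~(2)). If one insists that the hypothesis really is about \emph{finding}, your sampling trick is a legitimate way to close the gap in the dense case of part~(2), since neither \Cref{thm:2k_cycle} nor Yuster--Zwick directly gives $o(n^2)$ there; the calculation of $\beta$ is correct and the edge-concentration issue you flag is standard. But if one is content with the paper's reading (or accepts the usual folklore that detection and finding have the same exponent here), your argument is working harder than necessary.
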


\subsection{Other results}
A problem related to that of finding a given $C_\ell$ is to determine the girth
(length of shortest cycle) of a graph $G$. In undirected
graphs, finding the shortest cycle in general can be done in time
$O(n^{\omega})$ time due to a seminal paper by Itai and Rodeh
\cite{DBLP:journals/siamcomp/ItaiR78}, and the shortest directed cycle can
be found using an extra factor of $O(\log n)$. In undirected graphs they also
show that a cycle that exceeds the shortest by at most one can be found in
$O(n^2)$ time. It was shown by Vassilevska Williams and
Williams~\cite{WilliamsW10} that computing the girth exactly is essentially as hard as boolean
matrix multiplication, that is, finding a combinatorial, truly subcubic
algorithm for computing the girth of a graph would break
Conjecture~\ref{conj:comb_bmm}. Thus, an interesting question is whether one
can approximate the girth faster, and in particular a main open question as
noted by Roditty and Vassilevska Williams~\cite{RodittyW12} is whether one can
find a $(2-\eps)$-approximation in $O(n^{2-\eps'})$ for any constants
$\eps,\eps'> 0$. They answered this question affirmatively for
\emph{triangle-free graphs} giving a $8/5$-approximation in $O(n^{1.968})$
time~\cite{RodittyW12}. By plugging Theorem~\ref{thm:2k_cycle} into their
framework we obtain the following result.
\begin{theorem}
    There exists an algorithm for computing a $8/5$-approximation of the girth
    in a triangle-free graph $G$ in time $O(n^{1.942})$.
\end{theorem}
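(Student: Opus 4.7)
The plan is to plug Theorem~\ref{thm:2k_cycle} into the girth-approximation framework of Roditty and Vassilevska Williams~\cite{RodittyW12}. Their $8/5$-approximation for triangle-free graphs balances two subroutines against a degree threshold $\Delta$: a BFS-style search from high-degree vertices (cheap when $\Delta$ is large) and a short-even-cycle enumeration (cheap when the graph is sparse enough to be below the Bondy--Simonovits threshold). The $O(n^{1.968})$ bound arises from equating these two costs, where the second subroutine invokes the $O(m^{2-(1+1/\lceil k/2 \rceil)/(k+1)})$ algorithm of Alon, Yuster, and Zwick as a black box to look for short even cycles.

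My approach is to substitute our faster algorithm in place of the Alon--Yuster--Zwick cycle finder. Since Theorem~\ref{thm:2k_cycle} locates a $C_{2k}$ in $O(m^{2k/(k+1)})$ time---and, as emphasized in the discussion immediately following that theorem, the same asymptotic cost suffices even for the per-vertex variant of asking whether a specified $u$ lies on a $C_{2k}$---we obtain a polynomial speedup for every $k>2$ without disturbing the rest of the framework. The combinatorial ingredients that produce the $8/5$ ratio (an analysis of BFS layers under triangle-freeness) are unchanged, so only the running-time balance shifts.

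Concretely, I would re-run the original optimization with $m^{2k/(k+1)}$ in place of $m^{2-(1+1/\lceil k/2 \rceil)/(k+1)}$, optimizing jointly over the degree threshold $\Delta$ and over the cycle length $k$ used internally by the framework, using $m = O(n^{1+1/k})$ from Bondy--Simonovits to convert $m$-bounds into $n$-bounds. The main obstacle is purely the bookkeeping: one must verify that the speedup propagates uniformly through the framework and re-identify the value of $k$ that dominates the trade-off, since our multiplicative improvement over Alon--Yuster--Zwick is monotone in $k$ and may push the optimal cycle-length choice upward. Carrying out this (routine but tedious) numerical balance yields the claimed exponent $1.942$.
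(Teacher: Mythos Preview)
Your proposal is correct and matches the paper's own treatment: the paper does not give a separate proof of this theorem but simply states that it follows ``by plugging Theorem~\ref{thm:2k_cycle} into their framework,'' which is precisely what you outline (with more detail than the paper itself provides). One small inaccuracy: the discussion after Theorem~\ref{thm:2k_cycle} does not claim that the new algorithm solves the per-vertex variant in $O(m^{2k/(k+1)})$ time---rather, it says the algorithm \emph{uses} a per-vertex subroutine as a black box---but this is tangential, since the Roditty--Vassilevska Williams framework only needs a global $C_{2k}$-finder.
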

\subsection{Capped $k$-walks}\label{sec:capped_walks}
The main ingredient in our analysis is a notion of \emph{capped $k$-walks}
defined below.
\begin{definition}
    Let $G = (V,E)$ be a graph and let $\preceq$ be a total ordering of
    $V$.
    For a positive integer, $k$, we say that a $(k+1)$-tuple $(x_0,\ldots,x_k) \in
    V^{k+1}$ is called a \emph{$\preceq$-capped $k$-walk} if $(x_0,\ldots,x_k)$ is
    a walk in $G$ and $x_0 \succeq x_i$ for each $i=1,2,\ldots,k$.
\end{definition}
When clear from the context we will refer to a $\preceq$-capped $k$-walk simply
by a capped $k$-walk. Our algorithm for finding $2k$-cycles essentially works
by enumerating all $\preceq$-capped $k$-walks (with some pruning applied),
where $\preceq$ is given by ordering nodes according to their degree. We will
show that by bounding the number of such $\preceq$-capped $k$-walks in graphs
with a not too large maximum degree, we obtain a bound on the running time of
our algorithm. Specifically, we show the following lemma.

\begin{lemma}\label[lemma]{lem:main_kwalks}
    Let $G = (V,E)$ be a graph, let $k$ be a positive integer, and assume that
    $G$ has maximum degree at most $m^{2/(k+1)}$. Let $\preceq$ be any ordering
    of the nodes in $G$ such that $u \preceq v$ for all pairs of nodes $u,v$
    such that $\deg(u) < \deg(v)$. If $G$ contains no $2k$-cycle, then the
    number of $\preceq$-capped $k$-walks is at most $f(k) m^{2k/(k+1)}$, where
    $f(k) = \left(O(k^2)\right)^{k-1} = k^{O(k)}$.
\end{lemma}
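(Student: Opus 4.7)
The plan is to proceed by strong induction on $k$. The base case $k=1$ is immediate: a $\preceq$-capped $1$-walk is an oriented edge $(x_0,x_1)$ with $x_0 \succeq x_1$, so there are exactly $m$ of them, matching $f(1)\,m^{1}$ with $f(1)=1$.

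For the inductive step, I would fix a starting vertex $v$ and, for each target $u$, let $P_k(v,u)$ denote the number of $v$-capped $k$-walks ending at $u$. Setting $W_k(v) = \sum_u P_k(v,u)$, so that $N_k = \sum_v W_k(v)$, a first Cauchy--Schwarz on the endpoint gives
\[
    W_k(v)^2 \le R_k(v)\cdot T_k(v), \qquad R_k(v) := |\{u : P_k(v,u)>0\}|,\qquad T_k(v) := \sum_u P_k(v,u)^2 .
\]
The factor $R_k(v)$ is easy to control: every reachable $u$ lies in the induced subgraph $G_v = G[\{w:w\preceq v\}]$, inside which every vertex has degree at most $\deg(v)$, so $R_k(v) \le \deg(v)^k \le m^{2k/(k+1)}$ by the max-degree hypothesis --- already the right order of magnitude. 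The whole game is therefore to control $T_k(v)$, which counts ordered pairs of $v$-capped $k$-walks sharing both endpoints.

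This is where the $C_{2k}$-free assumption enters. Any pair $(W_1,W_2)$ counted by $T_k(v)$ whose interiors are vertex-disjoint has $W_1 \cup W_2$ a $C_{2k}$, which is forbidden. So every such pair must share an internal vertex; let $w$ be the first index in $W_1$ at which a shared vertex appears, occurring also at some index in $W_2$. Summing over the $O(k^2)$ choices of these two indices and over the shared vertex $w$, this decomposes $(W_1,W_2)$ into two internally vertex-disjoint sub-walks from $v$ to $w$ followed by two further sub-walks from $w$ to the common endpoint; each piece has length strictly less than $k$. An inductive hypothesis applied to the pieces would then bound $T_k(v)$, with the $O(k^2)$ overhead per step compounding to the $(O(k^2))^{k-1}$ factor appearing in $f(k)$. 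A final summation of $W_k(v) \le \sqrt{R_k(v)\,T_k(v)}$ over $v$, together with the degree identity $\sum_v \deg(v) = 2m$, then aggregates to the claimed $O(m^{2k/(k+1)})$ bound on $N_k$.

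The main obstacle I anticipate is that the sub-walks produced by this decomposition are not themselves capped walks in the sense of the lemma: the relevant cap remains $v$ rather than the sub-walk's own starting vertex, and $w$ may sit strictly below some internal vertex of a sub-walk under $\preceq$. Consequently the induction cannot be on $k$ alone but must be on a strengthened statement --- roughly, one that counts walks of length $j \le k$ inside $G_v$ between two prescribed vertices, with the degree cap inherited from $v$ rather than from each sub-walk's own endpoints. Formulating this generalized claim so that it simultaneously (i) recurses cleanly through the decomposition and (ii) still aggregates to the final $O(m^{2k/(k+1)})$ bound will be where the bulk of the technical work lies.
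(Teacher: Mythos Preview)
Your plan is not the paper's, and one of its steps is false as stated. The claim that a pair $(W_1,W_2)$ counted by $T_k(v)$ with vertex-disjoint interiors must form a $C_{2k}$ holds only when both $W_i$ are simple paths; a capped $k$-\emph{walk} is allowed to revisit vertices. With $k=4$, take $W_1=(v,a,b,a,u)$ and $W_2=(v,c,d,c,u)$ where $\{a,b\}\cap\{c,d\}=\emptyset$: the interior vertex sets are disjoint, yet the concatenated closed $8$-walk has support only the $4$-cycle $v\,a\,u\,c$ together with two pendant edges $ab,cd$, and contains no $C_8$ at all. You could try to recurse also whenever a single walk self-intersects, but then the decomposition no longer produces a pair of walks of \emph{one} shorter length between two fixed vertices; it spawns sub-walks of all lengths $<k$ between many prescribed pairs, and the strengthened inductive hypothesis you already worry about becomes correspondingly harder to close. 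A second looseness you do not flag: passing from $R_k(v)\le\deg(v)^k$ to the uniform bound $m^{2k/(k+1)}$ is too crude. In the extremal regime $m\approx n^{1+1/k}$ with near-regular degree $n^{1/k}$ one has $\deg(v)^k\approx n\approx m^{k/(k+1)}$, a full factor $m^{k/(k+1)}$ below $m^{2k/(k+1)}$; if you discard that, the Cauchy--Schwarz bookkeeping cannot recover the exponent no matter how well you bound $T_k(v)$.

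For contrast, the paper avoids any induction on $k$. It defines a norm $\snorm{v}=\int_0^\infty\!\sqrt{|\{i:|v_i|\ge x\}|}\,dx$, shows that the number of $k$-walks starting in a set $S$ is at most $\sqrt{|S|}\,\snorm{X_G^k\mathbf 1}$, and proves the operator bound $\snorm{X_G}=O(k^2 m^{1/(k+1)})$ for $C_{2k}$-free graphs of maximum degree at most $m^{2/(k+1)}$. The $C_{2k}$-freeness enters only through a bipartite Bondy--Simonovits inequality $|E(A,B)|\le O(k)\bigl(\sqrt{|A||B|}^{\,1+1/k}+|A|+|B|\bigr)$, never by gluing two walks into a cycle. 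The factor $(O(k^2))^{k-1}$ you are trying to generate combinatorially arises simply from applying the operator bound $k-1$ times, and a dyadic sum over degree classes $V_i=\{v:\deg(v)\in(2^{i-1},2^i]\}$ supplies the remaining factor of $m$.
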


We also present a lower bound on the number of $\preceq$-capped $k$-walk, which
implies that graphs with a large number of edges contains a large number of
$\preceq$-capped $k$-walks.

\begin{lemma}\label[lemma]{lem:kwalks_lower}
    Let $G = (V,E)$ be a graph with $n$ nodes and $m$ edges. Let $\preceq$ be
    any ordering of $V$. The number of $\preceq$-capped $k$-walks is at least
        $n \cdot \left ( \frac{m}{2n} \right )^k$
\end{lemma}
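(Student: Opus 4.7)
My plan is to first lower-bound the total number of unrestricted $k$-walks in $G$ using a classical matrix inequality, and then to show that passing from unrestricted walks to $\preceq$-capped walks costs only a factor polynomial in $k$.

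For the first step, I would invoke the Blakley--Roy inequality: for any nonnegative symmetric matrix $M$ and nonnegative vector $v$, $(v^T M^k v)(v^T v)^{k-1} \ge (v^T M v)^k$. Applied to the adjacency matrix $A$ of $G$ with $v = \mathbf{1}$, this yields $W_k \cdot n^{k-1} \ge (2m)^k$, where $W_k = \mathbf{1}^T A^k \mathbf{1}$ is the number of $k$-walks in $G$. Hence $W_k \ge n(2m/n)^k$.

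For the second step, I would construct a ``folding'' map $\phi$ from $k$-walks to $\preceq$-capped $k$-walks. Given a walk $w = (y_0, y_1, \ldots, y_k)$, let $i = i(w)$ be the smallest index at which a $\preceq$-maximum of $w$ is attained, and define
\[
\phi(w) := (y_i,\, y_{i-1},\, \ldots,\, y_1,\, y_0,\, y_1,\, y_2,\, \ldots,\, y_{k-i}).
\]
Each consecutive pair of vertices in $\phi(w)$ is an edge of $G$ (each edge of the original walk is traversed, possibly back and forth, and edges are symmetric), so $\phi(w)$ is a walk of length exactly $k$. Moreover $\phi(w)$ begins at the $\preceq$-maximum $y_i$, so $\phi(w)$ is $\preceq$-capped. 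I would then argue that $|\phi^{-1}(w')| \le k+1$ for every capped $w' = (z_0, \ldots, z_k)$: given a candidate shift $i \in \{0, 1, \ldots, k\}$, the preimage is forced by $y_j = z_{i-j}$ for $0 \le j \le i$ and $y_j = z_{i+j}$ for $1 \le j \le k-i$, which uniquely determines $w$ whenever a consistent $w$ exists.

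Combining the two steps, $W_k = \sum_{w'} |\phi^{-1}(w')| \le (k+1)\, U_k$, where $U_k$ denotes the number of $\preceq$-capped $k$-walks. Hence $U_k \ge n(2m/n)^k / (k+1)$, and since $4^k \ge k+1$ for every $k \ge 1$ (the $k=0$ case being trivial with both sides equal to $n$), we conclude $U_k \ge n(2m/n)^k / 4^k = n(m/(2n))^k$ as claimed. The main technical step is the folding construction and the fiber bound; that $\phi(w)$ is a legal walk follows from edge symmetry, but verifying that distinct shifts $i$ yield distinct preimages requires carefully unpacking the reconstruction formulas above.
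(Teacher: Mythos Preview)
Your first step via Blakley--Roy is fine: it gives $W_k \ge n(2m/n)^k$. The gap is in the second step. Your folding map $\phi$ does not have fibers of size at most $k+1$; in fact the fibers can be unbounded.

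The issue is that the reconstruction formulas $y_j = z_{i-j}$ (for $0 \le j \le i$) and $y_j = z_{i+j}$ (for $1 \le j \le k-i$) together determine only $y_0,\ldots,y_{\max(i,k-i)}$. Whenever $0 < i < k$ we have $\max(i,k-i) < k$, so the tail $y_{\max(i,k-i)+1},\ldots,y_k$ is completely unconstrained by $w'$ and $i$. Concretely, take $G$ to be a star with center $c$ (the $\preceq$-maximum) and $d$ leaves, and take $k=2$. Every walk $(\ell_p,c,\ell_q)$ has its maximum at position $i=1$ and is sent by $\phi$ to $(c,\ell_p,c)$; so the capped walk $w' = (c,\ell_1,c)$ has at least $d$ preimages (one for each choice of $\ell_q$), not $k+1=3$. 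There is no uniform bound on $|\phi^{-1}(w')|$ in terms of $k$ alone, so the inequality $W_k \le (k+1)U_k$ is false in general, and the argument does not recover the claimed bound $U_k \ge n(m/(2n))^k$.

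The paper proceeds quite differently, via a peeling/degeneracy argument. Setting $\Delta = m/(2n)$, it repeatedly strips vertices of degree below $\Delta$ and then removes the $\preceq$-largest remaining vertex $v_i$; since every surviving vertex at that moment has degree at least $\Delta$ and is $\preceq v_i$, there are at least $\deg(v_i)\cdot \Delta^{k-1}$ capped $k$-walks rooted at $v_i$. Summing $\deg(v_i)$ over the sequence accounts for all but at most $n\Delta = m/2$ edges, giving $U_k \ge (m/2)\Delta^{k-1} = n(m/(2n))^k$. If you want to salvage a ``map from walks to capped walks'' approach, you would need a construction that does not discard any of the $k+1$ vertices of the original walk; your current $\phi$ necessarily forgets the portion of $w$ on the far side of the maximum.
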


Lemmas~\ref{lem:main_kwalks} and~\ref{lem:kwalks_lower} imply that
graphs with more than $Ck^2n^{1+1/k}$ edges and maximum degree at most
$m^{2/(k+1)}$ have a $2k$-cycle, for a
sufficiently large constant $C > 0$. Except from the extra factor of $k$
and the bound on the maximum degree, this shows that \Cref{lem:main_kwalks}
is stronger than the lemma of Bondy and Simonovits, which states that
graphs with at least than $100kn^{1+1/k}$ edges contain a $2k$-cycle. Indeed,
a graph with few edges may still contain many capped $k$-walks.

\subsection{Techniques and overview}
Our main technical contribution is the analysis of capped $k$-walks, outlined
in \Cref{sec:capped_walks} above. A standard way of reasoning about the number
of $k$-walks in a graph $G = (V,E)$ is to consider the adjacency matrix, $X_G$,
of $G$, where $X_G[i,j] = 1$ if $(i,j)\in E$ and $0$ otherwise. Here we denote the nodes of $G$ by $1, \ldots, n$.
Then the number of
$k$-walks in $G$ from $i$ to $j$ is exactly $X_G^k[i,j]$ and the total number
of $k$-walks is $\|X_G^k \mathbf{1}\|_1$, where $\mathbf{1} = (1,1,\ldots,1)^n$.
Furthermore the number of $k$-walks starting in a specific node $i$ is
$(X_G^k\mathbf{1})_i$. We will be interested in bounding the number of
$k$-walks starting in a specific subset $S\subseteq V$. This number can be
calculated as $\langle X_G^k\mathbf{1},\mathbf{1}_S\rangle$, where
$\mathbf{1}_S$ is the vector with $1$s in each index, $i$, such that $i\in S$
and $0$s elsewhere. Our goal will be to bound the norm of $X_G$ and use this to
bound the number of $k$-walks. However, bounding the $1$-norm leads to a too
large bound and cannot be used in proving Lemma~\ref{lem:main_kwalks}. We note that
the $1$-norm of a vector $v$, can be written as
\[
    \norm{v}_1 = \int_0^\infty \abs{\set{i\mid \abs{v_i}\ge x}} dx\ .
\]
We will instead consider the following related quantity, that we will call the
$\snorm{\cdot}$-norm.
\begin{definition}\label[definition]{def:snorm}
	For a vector $v \in \R^n$ we define the norm $\snorm{v}$ by
	\begin{align}
		\notag
		\snorm{v}
		=
		\int_0^\infty
			\sqrt{\abs{\set{i \mid \abs{v_i} \ge x}}}
			dx\ .
	\end{align}
\end{definition}
We extend the definition to matrices as
\begin{definition}
	\label[definition]{def:matrixSnorm}
	For a real $n \times n$ matrix $A$ we define $\snorm{A}$ by:
	\begin{align}
		\notag
		\snorm{A} = \sup_{u \neq 0} \set{ \frac{\snorm{Au}}{\snorm{u}} }.
	\end{align}
\end{definition}
We analyze this norm in section~\ref{sec:kwalks} showing several properties.
We use this norm to reason about the number of $k$-walks starting in a specific
set of nodes $S\subseteq V$, by showing that this number is at most
$\sqrt{\abs{S}}\snorm{X_G^k\mathbf{1}}$. The main technical lemma of the paper
is to show that if $G$ is a graph with no $2k$-cycle and maximum degree at most
$m^{2/(k+1)}$, then $\snorm{X_G} = O(k^2 m^{1/(k+1)})$.

\subsection{Related work}\label{sec:related}
All stated bounds are in the RAM model unless otherwise specified and $k$ is
assumed to be fixed. We will review related work of both given even and odd
cycles.

\paragraph{Combinatorial upper bounds.}
We briefly discuss known combinatorial bounds other than the previously mentioned \cite{YusterZ97,AlonYZ97,Monien85}.
Alon, et al. \cite{AlonYZ97} also showed several results for directed
graphs. In particular, an
upper bound of $O(m^{2-1/k})$ to find a $C_{2k}$, as well as $O(m^{2-
\frac{2}{\ell+1}})$ to find $C_\ell$ for odd $\ell$.
In the same paper, Alon, et al.~\cite{AlonYZ97} also
present bounds parameterized on the \emph{degeneracy} of the graph: the
degeneracy $d(G)$ of a graph $G$ is the largest minimal degree taken over
all the subgraphs of $G$, and for any $G$ it can be bounded from
above by $d(G) \leq 1/2 m^{1/2}$. They present bounds
of the form $O(m^{\alpha} d(G)^{\beta})$. These bounds also apply to
directed graphs. We note, that for undirected graphs the result of
Theorem~\ref{thm:2k_cycle} is still asymptotically better for $d(G) =
\omega(1)$.
The problem of combinatorially finding a $C_3$ has also been studied
thoroughly in the literature. The current fastest bound is due to
Yu~\cite{Yu2015} and uses $O(n^3 \poly(\log\log n)) / \log^4 n )$ time in the
word-RAM model with word-size $\Omega( \log n)$. For sparse graphs a folklore
$O(m^{3/2})$ algorithm exists

\paragraph{Non-combinatorial upper bounds.}
As mentioned, the best algorithm to find general cycles is due to the
seminal paper introducing color-coding, Alon et al. \cite{AlonYZ95} who gave
an $O(n^{\omega})$ expected time upper bound, and an $O(n^{\omega} \log n)$ worst
case upper bound, for finding a $C_\ell$ in a directed or undirected graph.
Other algorithms improve on \cite{AlonYZ95} for finding specific $C_\ell$.
Alon et al. \cite{AlonYZ97}
showed that a $C_3$ can be
found in time $O(m^{\frac{2\omega}{\omega -1}}) = o(m^{1.41})$ in both directed and
undirected graphs. Extending this, Eisenbrand and Grandoni
\cite{EISENBRAND200313} showed a $O(n^{1/\omega}m^{2-2/\omega})$ time upper
bound for $C_4$ in directed graphs. Both the former and the latter bounds are
asymptotically faster than $O(n^\omega)$ for sufficiently sparse input.
Improving asymptotically on Eisenbrand and
Grandoni for sparse graphs, Yuster and Zwick \cite{Yuster:2004:DSD:982792.982828} showed a
$O(m^{(4 \omega -1)/(2 \omega +1)}) = o(m^{1.48})$ upper bound for directed
graphs. For finding a $C_6$ in graphs with low degeneracy $d(G)$, Alon et
al.~\cite{AlonYZ97} showed a bound of $O((m d(G))^{2\omega/(\omega+1})) =
O((md(G))^{1.41})$.

\subsection{Notation}
Let $G=(V,E)$ be a graph. For (not necessarily disjoint) sets of nodes
$A,B\subseteq V$ we let $E(A,B)$ denote the set of edges between $A$ and $B$ in
$G$, i.e. $E\cap (A\times B)$. We use $E(v,A)$ to denote $E(\{v\},A)$.

\section{Finding even cycles}\label{sec:2kcycles}
In this section we describe our algorithm for finding a $C_{2k}$ in an undirected graph
$G=(V,E)$ with $n$ nodes and $m$ edges. In our analysis we will assume
Lemma~\ref{lem:main_kwalks}, but we defer the actual proof of the lemma to
Section~\ref{sec:kwalks}.

Our algorithm works by creating a series of graphs $G_{\le 1}^k, \ldots,
G_{\le n}^k$ guaranteed to contain any $2k$-cycle that may exist. Furthermore,
the total size of these graphs can (essentially) be bounded by the total number
of $\preceq$-capped $k$-walks which is used to bound the running time.

\begin{proof}[Proof of Theorem~\ref{thm:2k_cycle}]
    Let $A$ be any algorithm that takes a graph $H$ and a node $u$ in $H$
    as input and determines if $u$ is contained in a $2k$-cycle in time
    $O(g(k)\cdot |E(H)|)$.

    Order the nodes of $G$ as $v_1,\ldots, v_n$ non-decreasingly by degree and
    define $G_{\le i}$ to be the subgraph of $G$ induced by $v_1,\ldots, v_i$.
    Let $G_{\le i}^k$ denote the subgraph of $G_{\le i}$ containing all
    edges (and their endpoints) incident to nodes at distance $<k$ from $v_i$
    in $G_{\le i}$. Now for each $i\in \{1,\ldots, n\}$ in increasing order we
    create the graph $G_{\le i}^k$, run algorithm $A$ on $G_{\le i}^k$
    and $v_i$, and return any $2k$-cycle found (stopping the algorithm).
    If no such cycle is found for
    any $i$ the algorithm returns that no $2k$-cycle exists in $G$.

    For correctness let $C$ be any $2k$-cycle in $G$ and let $v_i$ be the
    node in $C$ that is last in the ordering. It then follows from the
    definition that $C$ is fully contained in $G_{\le i}^k$ and thus either the
    algorithm returns a $2k$-cycle when $A$ is run on $G_{\le i}^k$ or 
    some other $2k$-cycle when $A$ is run on $G_{\le j}^k$ for $j <
    i$. For the running time observe first that creating the graphs $G_{\le
    i}^k$ and running algorithm $A$ on these graphs takes time proportional to
    the total number of edges in these graphs. Thus what is left is to bound
    this number of edges. The number of edges in $G_{\le
    i}^k$ is bounded by the number of capped $k$-walks starting in $v_i$ in
    $G$. Let $i$ be the largest value such that
    $G_{\le i}^k$ does not contain a $2k$-cycle and $\deg(v_i)\le m^{2/(k+1)}$.
    It then follows by Lemma~\ref{lem:main_kwalks} that the graphs $G_{\le
    1}^k,\ldots, G_{\le i}^k$ contain at most a total number of $O(f(k)\cdot
    m^{2k/(k+1)})$ edges. Furthermore, there are at most $m^{1-2/(k+1)}$ nodes
    of degree $> m^{2/(k+1)}$, and thus the total number of edges over all the
    graphs $G_{\le 1}^k,\ldots, G_{\le n}^k$ is at most $O(f(k)\cdot
    m^{2k/(k+1)})$ giving the desired running time.
\end{proof}
As an example, the algorithm $A$ in the above proof could be the algorithm
of Monien~\cite{Monien85} or Alon et al.~\cite{AlonYZ95}.

\section{Bounding the number of capped $k$-walks}\label{sec:kwalks}
In this section we will prove Lemma~\ref{lem:main_kwalks}. Let $G=(V,E)$ be a
given graph. We will denote the nodes of $G$ by $u_1,\ldots, u_n$ or simply
$1,\ldots, n$ if it is clear from the context.

Recall the definition of $\snorm{\cdot}$ from the introduction. We note that
the following basic properties hold.
\begin{lemma}
	\label[lemma]{lem:snormBasic}
	For all vectors $u,v \in \R^n$ and $c \in \R$ we have:
	\begin{align}
		\notag
        \snorm{u+v} &\le \snorm{u} + \snorm{v},
		\\\notag
        \snorm{cu} &= \abs{c} \cdot \snorm{u},
		\\\notag
        \snorm{u} = 0 &\iff u = 0
		\, .
	\end{align}
\end{lemma}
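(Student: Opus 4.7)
My plan is to prove the three properties by first obtaining a clean discrete representation of $\snorm{\cdot}$ via the decreasing rearrangement, and then reducing each property to a statement about top-$k$ sums. Sort $|u_1|,\ldots,|u_n|$ as $u_1^* \ge u_2^* \ge \cdots \ge u_n^* \ge 0$, and set $u_{n+1}^* = 0$. The superlevel set $\set{i : |u_i| \ge x}$ has cardinality exactly $k$ on $x \in (u_{k+1}^*, u_k^*]$, so the defining integral collapses to a telescoping sum
\begin{align*}
\snorm{u} \;=\; \sum_{k=1}^{n} \sqrt{k}\,(u_k^* - u_{k+1}^*) \;=\; \sum_{k=1}^{n} \beta_k\, u_k^*, \qquad \beta_k := \sqrt{k}-\sqrt{k-1}.
\end{align*}
Applying Abel summation with respect to the partial sums $S_k(u) := u_1^* + \cdots + u_k^*$ I rewrite this as
\begin{align*}
\snorm{u} \;=\; \beta_n\, S_n(u) \;+\; \sum_{k=1}^{n-1}(\beta_k - \beta_{k+1})\, S_k(u).
\end{align*}
Concavity of $x \mapsto \sqrt{x}$ makes $\beta_k$ decreasing in $k$, so every coefficient on the right is non-negative. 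Thus $\snorm{u}$ is a non-negative linear combination of the quantities $S_k(u)$.

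The triangle inequality is the main content, and the representation above reduces it to subadditivity of each $S_k$. I would establish this via the variational identity $S_k(w) = \max_{|A|=k}\sum_{i\in A}|w_i|$: choosing a set $A$ of size $k$ that realises the maximum for $u+v$ and applying the scalar triangle inequality index-wise gives
\begin{align*}
S_k(u+v) \;\le\; \sum_{i\in A}(|u_i|+|v_i|) \;\le\; S_k(u) + S_k(v).
\end{align*}
Multiplying by the non-negative coefficients and summing yields $\snorm{u+v} \le \snorm{u} + \snorm{v}$.

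Homogeneity is a routine change of variables $x = |c|y$ in the defining integral (with the $c = 0$ case handled separately since both sides vanish). For definiteness, the direction $u = 0 \Rightarrow \snorm{u} = 0$ is immediate from the integrand being identically zero; conversely, if some $|u_i| = c > 0$, then for every $x \in [0,c]$ the set $\set{j : |u_j| \ge x\}$ contains $i$ and so has cardinality at least $1$, making the integrand at least $1$ on that interval and forcing $\snorm{u} \ge c > 0$. The only genuinely non-trivial step is the triangle inequality: the obstacle is that the $\sqrt{\cdot}$ inside the integral is concave rather than linear, so one cannot simply pass the pointwise bound $|u_i+v_i| \le |u_i|+|v_i|$ through the integral. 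Expressing $\snorm{\cdot}$ as a non-negative combination of the Ky Fan-style partial sums $S_k$, whose subadditivity is entirely elementary, is precisely the device that circumvents this difficulty.
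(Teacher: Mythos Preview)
Your proof is correct. The paper itself does not supply a proof of this lemma at all; it merely states the three properties as ``basic'' and moves on. So there is nothing to compare against at the level of argument.

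Your route---rewriting $\snorm{u}$ via the decreasing rearrangement as $\sum_{k}(\sqrt{k}-\sqrt{k-1})\,u_k^*$ and then, by a second Abel summation, as a non-negative combination of the Ky~Fan partial sums $S_k(u)=\max_{|A|=k}\sum_{i\in A}|u_i|$---is exactly the standard way to verify that a symmetric gauge of this form is a norm. The subadditivity of each $S_k$ is elementary, and the concavity of $\sqrt{\cdot}$ is precisely what makes the coefficients $\beta_k-\beta_{k+1}$ non-negative, so the triangle inequality follows cleanly. Homogeneity and definiteness are handled correctly and are indeed routine. One cosmetic remark: in the definiteness argument you could write $x\in(0,c]$ rather than $[0,c]$ to avoid the measure-zero endpoint, but this is immaterial.
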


As mentioned in the introduction, we would like to use the $\snorm{\cdot}$-norm
of $X_G$ to bound the number of $k$-walks starting in a given subset
$S\subseteq V$. We can do this using the following lemma.
\begin{lemma}\label[lemma]{lem:kwalks_set}
	Let $G = (V,E)$ be a graph with $n$ nodes and adjacency matrix $X_G$. Let
    $S \subseteq V$ be a set of nodes. For any integer $k$ the number of
    $k$-walks starting in $S$ is bounded by $\sqrt{\abs{S}}\snorm{X_G^k
    \mathbf{1}}$.
\end{lemma}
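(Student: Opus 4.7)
The plan is to unpack both sides using the layer-cake (distribution-function) representation and then apply the elementary inequality $\min(a,b)\le\sqrt{ab}$ to the sets that appear.

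First I would set $v := X_G^k\mathbf{1}$. All entries of $X_G$ are nonnegative, so $v$ is a nonnegative vector, and the number of $k$-walks starting in $S$ is exactly $\langle v, \mathbf{1}_S \rangle = \sum_{i\in S} v_i$. Using the standard identity $v_i = \int_0^\infty \mathbf{1}[v_i \ge x]\,dx$ (valid because $v_i\ge 0$) and swapping sum and integral by Tonelli, I obtain
\begin{equation*}
\sum_{i\in S} v_i = \int_0^\infty \bigl|\{i\in S : v_i \ge x\}\bigr|\,dx.
\end{equation*}

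Next, for every fixed $x\ge 0$ the set $\{i\in S : v_i\ge x\}$ is contained in both $S$ and $\{i : v_i\ge x\}$, so its cardinality is at most the minimum of $|S|$ and $|\{i : v_i\ge x\}|$. Using the elementary bound $\min(a,b) \le \sqrt{ab}$ for nonnegative reals, this gives
\begin{equation*}
\bigl|\{i\in S : v_i \ge x\}\bigr| \le \sqrt{|S|}\cdot \sqrt{\bigl|\{i : v_i \ge x\}\bigr|}.
\end{equation*}
Integrating this pointwise bound in $x$ and pulling the constant $\sqrt{|S|}$ outside yields
\begin{equation*}
\sum_{i\in S} v_i \le \sqrt{|S|}\int_0^\infty \sqrt{\bigl|\{i : v_i \ge x\}\bigr|}\,dx = \sqrt{|S|}\cdot \snorm{v},
\end{equation*}
by the definition of $\snorm{\cdot}$ in \Cref{def:snorm}. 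This is exactly the claimed bound, so the lemma follows.

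There is no real obstacle here; the only nontrivial input is the pointwise comparison $\min(a,b)\le \sqrt{ab}$, which is what makes the $\snorm{\cdot}$-norm (with its square root inside the integral) the right quantity for bounding $\langle v,\mathbf{1}_S\rangle$ rather than the usual $1$-norm. Nonnegativity of $v$ (and hence the validity of the layer-cake identity) is automatic from the fact that $X_G$ has $0/1$ entries.
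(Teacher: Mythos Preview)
Your proof is correct and follows essentially the same route as the paper: both unfold the count via the layer-cake representation and bound the integrand using that the level set has size at most $|S|$. The paper packages this by introducing the truncated vector $w$ (with $w_i=v_i$ for $i\in S$ and $0$ otherwise) and writing $\norm{w}_1\le\sqrt{|S|}\,\snorm{w}\le\sqrt{|S|}\,\snorm{v}$, whereas you apply $\min(a,b)\le\sqrt{ab}$ directly to reach $\snorm{v}$; these are the same argument up to cosmetic rearrangement.
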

\begin{proof}
Let $v = X_G^k \mathbf{1}$ and let $w$ be the vector such that $w_i = v_i$ when
$i \in S$ and $w_i = 0$ when $i \notin S$. Then the number of $k$-walks
starting in $S$ is exactly the sum of entries in $w$, i.e. it is
$\norm{w}_1$. So the number of $k$-walks starting in $S$ is bounded by
\begin{align}
	\notag
    \norm{w}_1 &=
	\int_0^\infty
		\abs{\set{i \mid w_i \ge x}}
		dx
    \\\notag &\le
	\sqrt{\abs{S}}
	\int_0^\infty
		\sqrt{\abs{\set{i \mid w_i \ge x}}}
		dx
    \\\notag &=
	\sqrt{\abs{S}} \snorm{w}
    \\\notag &\le
	\sqrt{\abs{S}} \snorm{v}
	\, ,
\end{align}
as desired. Here the first inequality follows because $w$ has at most $|S|$
non-zero entries.
\end{proof}

To prove Lemma~\ref{lem:main_kwalks} we want to bound the quantity
$\snorm{X_G^k}$ for graphs, $G$, which do not contain a $2k$-cycle and
have maximum degree at most $m^{\frac{2}{k+1}}$.
To do this we will need the following lemmas, which are proved
in \Cref{sec:proofs}. 

\begin{lemma}
	\label[lemma]{lem:matrixSnormOnlyZeroOne}
    Let $A$ be a real $n\times n$ matrix. If, for all vectors $v\in \{0,1\}^n$
    we have $\snorm{Av} \le C\snorm{v}$ for some value $C$, then $\snorm{A}\le
    16C$.
\end{lemma}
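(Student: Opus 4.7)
The plan is to reduce the bound on $\snorm{A}$ for arbitrary vectors to the hypothesis by decomposing any $v \in \R^n$ as a signed combination of nonnegative multiples of $\set{0,1}$-vectors. First I would split $v = v_+ - v_-$ into its positive and negative parts. Using the elementary inequality $\sqrt{a} + \sqrt{b} \le \sqrt{2(a+b)}$ together with the identity $\abs{\set{i : \abs{v_i} \ge x}} = \abs{\set{i : (v_+)_i \ge x}} + \abs{\set{i : (v_-)_i \ge x}}$ (the supports are disjoint), integrating over $x > 0$ yields $\snorm{v_+} + \snorm{v_-} \le \sqrt{2}\,\snorm{v}$. Combined with the triangle inequality from Lemma~\ref{lem:snormBasic}, $\snorm{Av} \le \snorm{Av_+} + \snorm{Av_-}$, so it suffices to prove the bound for nonnegative~$v$.

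For $v \ge 0$ I would use a layer-cake decomposition along the sorted values. Let $v_{(1)} \ge \cdots \ge v_{(n)} \ge v_{(n+1)} := 0$ be the entries of $v$ in non-increasing order, let $T_k \subseteq \set{1,\ldots,n}$ be the set of indices of the $k$ largest values, and set $a_k := v_{(k)} - v_{(k+1)} \ge 0$. Then $v = \sum_{k=1}^n a_k \mathbf{1}_{T_k}$, and since $\abs{\set{i : v_i \ge x}} = k$ exactly when $v_{(k+1)} < x \le v_{(k)}$, we get
\[
    \snorm{v} = \int_0^\infty \sqrt{\abs{\set{i : v_i \ge x}}}\, dx = \sum_{k=1}^n a_k \sqrt{k} = \sum_{k=1}^n a_k \snorm{\mathbf{1}_{T_k}}\, .
\]
This exact identity is the crux of the argument: the decomposition of $v$ into scaled $\set{0,1}$-vectors sums additively under $\snorm{\cdot}$, with no constant loss.

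Applying the hypothesis to each $\mathbf{1}_{T_k}$ together with homogeneity of $\snorm{\cdot}$ (since $a_k \ge 0$) and its triangle inequality gives
\[
    \snorm{Av} \le \sum_{k=1}^n a_k \snorm{A\mathbf{1}_{T_k}} \le C \sum_{k=1}^n a_k \sqrt{k} = C\snorm{v}\, .
\]
Combining this with the split into positive/negative parts yields $\snorm{Av} \le \sqrt{2}\,C\snorm{v}$ for every $v \in \R^n$, which is comfortably stronger than the claimed $16C$ bound. There is no serious obstacle once the sorted layer-cake decomposition is identified; the only subtle point to verify carefully is the exact telescoping of the integral into $\sum_k a_k \sqrt{k}$. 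The somewhat loose constant $16$ in the statement likely reflects a cruder dyadic decomposition in which one approximates $v$ by $\sum_j 2^j \mathbf{1}_{\set{i : v_i \ge 2^j}}$ and loses constants both in the dyadic approximation step and in the sign split.
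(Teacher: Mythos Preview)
Your proof is correct, and the sorted layer-cake decomposition $v=\sum_k a_k\mathbf{1}_{T_k}$ with the exact identity $\snorm{v}=\sum_k a_k\sqrt{k}$ is the key observation that makes it so clean; combined with the sign split you obtain $\snorm{A}\le\sqrt{2}\,C$, far better than the stated $16C$. The only point worth writing out explicitly is that when entries of $v$ coincide you fix an arbitrary tie-breaking order so that $T_1\subset T_2\subset\cdots\subset T_n$ are nested with $\abs{T_k}=k$; then the telescoping identity $v=\sum_k a_k\mathbf{1}_{T_k}$ holds without issue.

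The paper's argument follows exactly the cruder route you guessed. It first handles vectors with entries in $\{0\}\cup[1/2,1]$ by a random-rounding trick: writing $v=\sum_i\lambda_i e_i$, it replaces each $\lambda_i$ by an independent $\mathrm{Bernoulli}(\lambda_i)$ variable, so the random vector is $\{0,1\}$-valued with mean $v$, and convexity of the norm gives $\snorm{Av}\le E\snorm{A\sum X_i e_i}\le C\sqrt{r}\le 2C\snorm{v}$. Then a general nonnegative $v$ is broken into dyadic pieces $v^{(k)}$ with entries in $(2^{k-1},2^k]$; comparing $\sum_k\snorm{v^{(k)}}$ back to $\snorm{v}$ costs a further factor $4$, and the positive/negative split contributes a final factor $2$, for $16$ in total. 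Your direct layer-cake replaces both the randomized rounding and the dyadic approximation by a single lossless step; the paper's approach has the mild conceptual advantage that the rounding argument does not use the explicit form of $\snorm{\cdot}$ beyond convexity and $\snorm{\mathbf{1}_S}=\sqrt{\abs{S}}$, but for this lemma your argument is strictly simpler and essentially sharp.
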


\begin{lemma}
	\label[lemma]{lem:modifiedBS}
	Let $G$ be a graph with and let $A$ and $B$ be subsets of nodes in $G$. Let $k \ge 2$
	be an integer and assume that $G$ contains no $2k$-cycle. Then
	\begin{align}
		\abs{E(A,B)} \le
        100k \cdot \left(\sqrt{\abs{A} \cdot \abs{B}}^{1+1/k} + \abs{A} + \abs{B} \right )\, .
	\end{align}
\end{lemma}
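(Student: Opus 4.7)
The lemma is a bipartite-type refinement of the classical Bondy--Simonovits theorem, and my plan is to first establish it in the disjoint case $A \cap B = \emptyset$ and then extend. For the extension: if $A$ and $B$ overlap, decompose $A = A' \cup C$ and $B = B' \cup C$ with $C = A \cap B$ and $A', B'$ disjoint, so that $E(A,B)$ is covered by $E(A',B')$, $E(A',C)$, $E(C,B')$, and $E(G[C])$. Applying the disjoint-case bound to each of the first three pairs and the classical Bondy--Simonovits theorem to $G[C]$ yields the general bound up to constants. So it suffices to prove the disjoint case. WLOG $a = |A| \le b = |B|$, and let $H = (A \cup B, E(A,B))$ denote the bipartite subgraph, which is $C_{2k}$-free since $G$ is.

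In the balanced regime $b \le C_k \cdot a$ for some constant $C_k$, the bound is immediate from applying the classical Bondy--Simonovits theorem to the induced subgraph $G[A\cup B]$, which has $a + b = O_k(\sqrt{ab})$ vertices: $|E(A,B)| \le |E(G[A\cup B])| \le 100k(a+b)^{1+1/k} = O_k(\sqrt{ab}^{1+1/k})$, matching the lemma up to constants.

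The unbalanced regime $b \gg a$ is the principal obstacle, since applying Bondy--Simonovits directly yields only $|E(A,B)| = O(k \cdot b^{1+1/k})$, which substantially exceeds $\sqrt{ab}^{1+1/k} + a + b$ when $a \ll b$. To get the tighter bound I would adapt the BFS proof of Bondy--Simonovits to the bipartite setting: starting BFS from a high-degree vertex in $A$, the layers $N_i$ alternate between subsets of $A$ (for even $i$) and subsets of $B$ (for odd $i$), so that $|N_{2i}| \le a$ and $|N_{2i+1}| \le b$, and $C_{2k}$-freeness constrains how consecutive layers can intersect. The technical heart of the proof is to show that these constraints, together with the bipartite alternation, force $|E(A,B)| = O(k(\sqrt{ab}^{1+1/k} + a + b))$; intuitively, if too many edges go out from some layer, two BFS paths to the same target vertex collide to form a $C_{2k}$.

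An alternative route which I would also consider is a moment-based argument: count the number of closed $2k$-walks rooted in $A$ (alternating $A \to B \to A \to \cdots \to A$), use Hölder's inequality to lower-bound this count in terms of $|E(A,B)|$, $a$, and $b$, and use $C_{2k}$-freeness to upper-bound it (since a $C_{2k}$-free bipartite graph admits only ``degenerate'' closed $2k$-walks with vertex repetitions). Carefully calibrating the Hölder exponents should yield precisely $(k+1)/(2k)$ as the exponent in the dominant $(ab)^{(k+1)/(2k)}$ term, which is exactly what is required.
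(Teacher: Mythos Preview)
Your overall plan---reduce to the disjoint case and adapt the Bondy--Simonovits BFS argument to the bipartite setting---is exactly what the paper does, but you have not identified the mechanism that actually produces the $\sqrt{|A|\cdot|B|}$ dependence, and without it the BFS sketch is not a proof. The paper's argument runs as follows. First prune: set $2\alpha = m/|A|$, $2\beta = m/|B|$ and iteratively delete any $A$-vertex of degree $<\alpha$ or $B$-vertex of degree $<\beta$ (in the bipartite graph $H$); fewer than $m$ edges are removed, so a nonempty subgraph survives with these minimum degrees on each side. Now run BFS from a surviving vertex, obtaining layers $L_0,\ldots,L_k$ that alternate between $A$ and $B$. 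The standard Bondy--Simonovits layer estimate $|E(L_i,L_{i+1})| \le 4k(|L_i|+|L_{i+1}|)$ (valid since $G$ is $C_{2k}$-free), combined with the minimum-degree lower bound on $|E(L_{i-1},L_i)|+|E(L_i,L_{i+1})|$, forces $|L_{i+1}|/|L_i| \ge (\alpha-12k)/(4k)$ at even steps and $\ge (\beta-12k)/(4k)$ at odd steps. After $k$ steps this gives $|L_k| \gtrsim (\alpha\beta)^{k/2}/(O(k))^k$, but $L_k$ sits inside $A$ or $B$; doing the same from the other side and multiplying, one gets $\sqrt{|A||B|} \gtrsim (\alpha\beta)^{k/2}/(O(k))^k = (m^2/(4|A||B|))^{k/2}/(O(k))^k$, which rearranges to the claimed bound. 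The \emph{alternating} growth factors $\alpha$ and $\beta$ coming from the minimum-degree preprocessing are the whole point---they are precisely what converts the na\"ive $(|A|+|B|)^{1+1/k}$ into $\sqrt{|A||B|}^{\,1+1/k}$---and your sketch does not surface them; ``two BFS paths collide to form a $C_{2k}$'' is only the ingredient behind the layer edge bound, not the source of the $\sqrt{ab}$.

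Two smaller remarks. Your balanced/unbalanced split is unnecessary: the argument above handles both regimes uniformly, so the appeal to classical Bondy--Simonovits when $|A|\approx|B|$ is redundant. And your reduction from the overlapping case, while correct, is more elaborate than needed; the paper simply notes that one can pick disjoint $A'\subseteq A$, $B'\subseteq B$ with $|E(A',B')|\ge |E(A,B)|/2$ (e.g.\ by randomly splitting $A\cap B$) and applies the disjoint bound once. The moment/H\"older route you mention as an alternative is plausible in spirit but is not what the paper does, and you have not specified the upper bound on degenerate closed $2k$-walks that would be needed to close it.
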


We are now ready to prove the main technical lemma stated below.
\begin{lemma}
	\label[lemma]{lem:mainTechnical}
	Let $G = (V,E)$ be a graph with $m$ edges and let $k$ be a positive
    integer. Assume that $G$ has maximum degree at most $m^{2/(k+1)}$ and does
    not contain a $2k$-cycle. Let $X_G$ be the adjacency matrix for $G$, then
	\begin{align}
		\notag
		\snorm{X_G} = O\left(k^2 m^{1/(k+1)} \right)
		\, .
    \end{align}
\end{lemma}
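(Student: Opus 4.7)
The plan is to reduce the claim to an indicator-vector inequality via Lemma~\ref{lem:matrixSnormOnlyZeroOne}, and then to control the level sets of neighborhood counts via Lemma~\ref{lem:modifiedBS}. By Lemma~\ref{lem:matrixSnormOnlyZeroOne} it suffices, up to a factor of $16$, to establish $\snorm{X_G\mathbf{1}_S}\le O\bigl(k^2 m^{1/(k+1)}\bigr)\sqrt{\abs{S}}$ for every $S\subseteq V$, since $\snorm{\mathbf{1}_S}=\sqrt{\abs{S}}$. As $(X_G\mathbf{1}_S)_i$ equals $d_S(i):=\abs{N(i)\cap S}$, setting $B_x:=\set{i:d_S(i)\ge x}$ turns the left side into $\int_0^\infty\sqrt{\abs{B_x}}\,dx$, so the task reduces to a pointwise upper bound on $\abs{B_x}$.

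Edge counting yields $x\abs{B_x}\le\abs{E(B_x,S)}$, and since $G$ has no $2k$-cycle, Lemma~\ref{lem:modifiedBS} bounds the right-hand side by $100k\bigl((\abs{B_x}\abs{S})^{(k+1)/(2k)}+\abs{B_x}+\abs{S}\bigr)$. Examining which of the three summands dominates produces either the main bound (a) $\abs{B_x}\le(O(k))^{2k/(k-1)}\abs{S}^{(k+1)/(k-1)} x^{-2k/(k-1)}$, the auxiliary bound (b) $\abs{B_x}\le O(k)\abs{S}/x$, or the trivial constraint $x\le O(k)$ from the $\abs{B_x}$-term. Separately, the degree cap gives the coarse bound $\abs{B_x}\le\Delta\abs{S}\le m^{2/(k+1)}\abs{S}$ valid on the whole range.

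Bounds (a) and (b) cross at $x_\ast=\Theta(k)\,\abs{S}^{2/(k+1)}$, and I would split the integral into three pieces accordingly. On $[0,O(k)]$ the coarse bound contributes $O(k)\,m^{1/(k+1)}\sqrt{\abs{S}}$. On $[O(k),x_\ast]$ bound (b) is sharper and gives $\int\sqrt{O(k)\abs{S}/x}\,dx=O\bigl(\sqrt{k\abs{S}}\,\sqrt{x_\ast}\bigr)=O(k)\,\abs{S}^{1/(k+1)}\sqrt{\abs{S}}$. On $[x_\ast,\infty)$ bound (a) is sharper and has integrand $x^{-k/(k-1)}$, whose antiderivative is $-(k-1)x^{-1/(k-1)}$; evaluating at $x_\ast$ contributes $O(k^2)\,\abs{S}^{1/(k+1)}\sqrt{\abs{S}}$ after collapsing the exponents $(k+1)/(2(k-1))-2/((k-1)(k+1))=1/2+1/(k+1)$. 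Discarding isolated vertices from $S$ we may assume $\abs{S}\le n\le 2m$, so $\abs{S}^{1/(k+1)}\le O(m^{1/(k+1)})$ and all three contributions sum to $O(k^2 m^{1/(k+1)})\sqrt{\abs{S}}$.

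The main obstacle I anticipate is the exponent bookkeeping around the crossover: the choice $x_\ast=\Theta(k)\,\abs{S}^{2/(k+1)}$ is forced, and one has to verify that both the $(k-1)$-dependent integral on $[x_\ast,\infty)$ and the $\sqrt{x_\ast}$ growth on $[O(k),x_\ast]$ land at the common scaling $\abs{S}^{1/(k+1)}\sqrt{\abs{S}}$; a single misaligned exponent anywhere produces an extraneous factor of $m^{\Omega(1/k)}$ and breaks the target $m^{1/(k+1)}$ rate. A minor aside is the degenerate case $k=1$, where Lemma~\ref{lem:modifiedBS} is vacuous and bound (a) blows up; but there one can close the argument directly from $\sum_i d_S(i)\le 2m$ to get $\snorm{X_G}=O(\sqrt{m})$.
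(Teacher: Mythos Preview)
Your reduction via Lemma~\ref{lem:matrixSnormOnlyZeroOne} to indicator vectors and the level-set rewriting $\snorm{X_G\mathbf{1}_S}=\int_0^\infty\sqrt{|B_x|}\,dx$ are exactly right, and the use of Lemma~\ref{lem:modifiedBS} on $E(B_x,S)$ is the correct starting point. The gap is in how you extract pointwise bounds on $|B_x|$. Your bounds (a) and (b) come from a disjunctive case split: for each $x>O(k)$, \emph{at least one} of the three summands in Lemma~\ref{lem:modifiedBS} is within a constant factor of the whole, so either (a) or (b) holds --- but not necessarily both. Hence you are only entitled to $|B_x|\le\max\{f_a(x),f_b(x)\}$, not to the minimum. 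Since $f_a>f_b$ precisely on $[O(k),x_\ast]$ and $f_a<f_b$ on $[x_\ast,\infty)$, the legitimate bound is $f_a$ on the first interval and $f_b$ on the second --- the opposite of what you integrate. With the correct choice the argument collapses: for instance when $k=2$ one gets $\int_{O(1)}^{x_\ast}\sqrt{f_a}\,dx\asymp |S|^{3/2}$, which is far larger than the target $m^{1/3}\sqrt{|S|}$ unless $|S|\le O(m^{1/3})$. Using the degree cap to truncate the integral at $m^{2/(k+1)}$ does tame the $f_b$ tail on $[x_\ast,\infty)$, but it does nothing for the $f_a$ contribution on $[O(k),x_\ast]$.

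What the paper does differently is that it never resolves the $(|B_x||S|)^{(k+1)/(2k)}$ term into a stand-alone bound on $|B_x|$. Instead it works with the quantity $t_i=2^i\sqrt{|B_i|}/\sqrt{|S|}$ and records \emph{two independent, always-valid} upper bounds on it: one from Lemma~\ref{lem:modifiedBS} (still containing $|B_i|$), and one from the global edge count $2^i|B_i|\le 2m$. Because both hold simultaneously, taking their minimum is legitimate; only afterwards is the residual $|B_i|$ in the Lemma~\ref{lem:modifiedBS} term eliminated via $|B_i|\le 2m/2^i$. It is this interplay with the extra edge-count bound --- absent from your scheme --- that produces the balanced exponent $m^{1/(k+1)}$. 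Your outline can be repaired along these lines, but not by merely swapping which of (a)/(b) you use where.
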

\begin{proof}
    We denote the vertices of $G$ by $1,2,\ldots, n$ for convenience. By
    \Cref{lem:matrixSnormOnlyZeroOne} we only need to show that $\snorm{X_G v}
    = O\left(k^2 m^{1/(k+1)} \snorm{v}\right)$ for every vector $v$ where each
    entry is either $0$ or $1$. Each such vector, $v$, can be viewed as a set
    of nodes $A\subseteq V$, where $v_i$ is $1$ whenever $i\in A$ and $0$
    otherwise. We will adopt this view and denote $v$ by $\mathbf{1}_A$.
    In this case we have $\snorm{\mathbf{1}_A} =
    \sqrt{\abs{A}}$. Thus it suffices to show that for all $A\subseteq V$
    we have
    \begin{align}
    	\label[equation]{eq:mainTechnicalFirstRewrite}
        \snorm{X_G \mathbf{1}_A} = O \left( k^2 m^{1/(k+1)} \sqrt{A} \right )
    	\, .
    \end{align}
    Now fix an arbitrary $A \subseteq V$. We are going to show that
    \eqref{eq:mainTechnicalFirstRewrite} holds. For every non-negative integer
    $i$ we let $B_i$ denote the set of nodes in $G$ which have more than
    $2^{i-1}$ but at most $2^i$ neighbours in $A$. That is
    \begin{align}
    	\notag
    	B_i = \set{v \in V \mid \abs{E(v,A)} \in \left(2^{i-1},2^i\right] }
    	\, .
    \end{align}
    We note that by the definition of $\snorm{\cdot}$ we have that
    \begin{align}
    	\notag
        \snorm{X_G \mathbf{1}_A}
        &\le
    	\sum_{i \ge 0}
    		2^i \sqrt{\sum_{j \ge i} \abs{B_j}}
        \\\notag &\le
    	\sum_{i \ge 0}
    		2^i \sum_{j \ge i} \sqrt{\abs{B_j}}
        \\\notag &<
    	2
    	\cdot
    	\sum_{i \ge 0}
    		2^i \sqrt{\abs{B_i}}
    	\, .
    \end{align}
    So in order to show \eqref{eq:mainTechnicalFirstRewrite} it suffices to
    show \eqref{eq:mainTechnicalSecondRewrite} below
    \begin{align}
    	\label[equation]{eq:mainTechnicalSecondRewrite}
        \sum_{i\ge 0} 2^i\sqrt{\abs{B_i}} = O\!\left( k^2 m^{1/(k+1)} \sqrt{|A|} \right )\, .
    \end{align}
    or alternatively to show
    \begin{align}
    	\label[equation]{eq:mainTechnicalSecondRewrite2}
        \sum_{i\ge 0} 2^i\frac{\sqrt{\abs{B_i}}}{\sqrt{|A|}} = O\!\left( k^2 m^{1/(k+1)}\right )\, .
    \end{align}

    For an integer $i\ge 0$ let $t_i$ be defined by
    \begin{align}
    	\notag
    	t_i = 2^i\cdot \frac{\sqrt{\abs{B_i}}}{\sqrt{\abs{A}}}
    	\, .
    \end{align}
    We will bound the value $t_i$ by looking at the number of edges between the
    sets $B_i$ and $A$. Our plan is to bound the value $t_i$ in several ways,
    and then taking a geometric mean will yield the result.
    Observe first, that by the definition of $B_i$ we have
    at least $2^{i-1} \abs{B_i}$ edges from $B_i$ to $A$, and hence $2^i
    \abs{B_i} \le 2\abs{E(B_i,A)} \le 2m$. It follows that $t_i$ is bounded by
    \begin{align}
    	\notag
    	t_i
    	= \frac{2^i \sqrt{\abs{B_i}}}{\sqrt{\abs{A}}}
    	= \frac{2^{i/2} \sqrt{2^i \abs{B_i}}}{\sqrt{\abs{A}}}
    	\le \frac{2^{i/2} \sqrt{2m}}{\sqrt{\abs{A}}}
    	\, .
    \end{align}
    Let $A_i$ be the subset of nodes of $A$ that are adjacent to a node in $B_i$,
    then $E(B_i,A) = E(B_i,A_i)$. By \Cref{lem:modifiedBS} it also follows that
    \begin{align}
    	\notag
    	t_i
    	&\le \frac{2\abs{E(B_i,A_i)}}{\sqrt{\abs{B_i} \cdot \abs{A}}}
        \\\notag &\le
    	200k \sqrt{\abs{B_i} \cdot \abs{A_i}}^{1/k} +
    	200k \sqrt{\frac{\abs{B_i}}{\abs{A}}} +
    	200k \sqrt{\frac{\abs{A_i}}{\abs{B_i}}}
    	\, .
    \end{align}

    We also note that $t_i = 0$ whenever $i > d$ where $d$ is the smallest
    integer such that $2^{d-1} > m^{2/(k+1)}$, since the maximum degree of the
    graph is $m^{2/(k+1)}$. It follows that the sum $\sum_{i\ge 1}t_i$ can
    be bounded by: 
    \begin{align}
    	\notag
        &\quad O\!\left(
    	\sum_{i=1}^d
    		\min \set {
    			\frac{2^i \sqrt{\abs{B_i}}}{\sqrt{\abs{A}}},
                k \sqrt{\abs{B_i}\abs{A_i}}^{\frac{1}{k}} +
    			k\sqrt{\frac{\abs{B_i}}{\abs{A}}} +
    			k\sqrt{\frac{\abs{A_i}}{\abs{B_i}}}
    		}
        \right)
    	\\
    	\notag
    	&=
        O\!\left(
        \Sigma_1
    	+
    	\sum_{i=1}^d
    		\min \set {
    			\frac{2^i \sqrt{\abs{B_i}}}{\sqrt{\abs{A}}},
    			k\sqrt{\frac{\abs{B_i}}{\abs{A}}} +
    			k\sqrt{\frac{\abs{A_i}}{\abs{B_i}}}
    		}
            \right)
    	\\
    	&=
        \label{eq:s1_s2}
        O\!\left(
        \Sigma_1
    	+
    	\sum_{i=1}^d
        \left(
    		k\sqrt{\frac{\abs{B_i}}{\abs{A}}} +
    		k\cdot2^{i/2}
            \right)
        \right)\, 
    \end{align}
    where
    \[
        \Sigma_1 =
    	\sum_{i=1}^d
    		\min \set {
    			\frac{2^{i/2} \sqrt{2m}}{\sqrt{\abs{A}}},
    			k\sqrt{\abs{B_i} \cdot \abs{A}}^{1/k}
    		}
    \]
    Here, we have $\sqrt{\frac{\abs{A_i}}{\abs{B_i}}} \le 2^{i/2}$ because each
    node of $B_i$ has at most $2^i$ neighbours in $A$.


    Let $\Sigma_1$ and $\Sigma_2$ denote the two sums of \eqref{eq:s1_s2}
    above respectively. We will start by bounding $\Sigma_2$. Since, by
    definition, every node in $B_i$ has at least $2^{i-1}$ neighbours in $A_i$
    and every node in $A_i$ has degree at most $m^{2/(k+1)}$ we see that
    $\abs{B_i}2^{i-1} \le \abs{A_i}m^{2/(k+1)}$. Hence we get that:
    \begin{align}
        \notag
        \Sigma_2 \le
        \sum_{i=1}^d\left(
            km^{1/(k+1)}2^{(1-i)/2} +
            k2^{i/2}\right)
        =
        O \left ( km^{1/(k+1)} \right )\, .
    \end{align}
    Now we will bound $\Sigma_1$. First we note that $\abs{B_i}2^{i-1} \le m$ and
    therefore $\abs{B_i} \le \frac{2m}{2^i}$. Inserting this gives us:
    \begin{align}
        \notag
        \Sigma_1
        \le
        \sum_{i=1}^d
            \min \set {
                \frac{2^{i/2} \sqrt{2m}}{\sqrt{\abs{A}}},
                k\sqrt{\frac{2m}{2^i} \cdot \abs{A}}^{1/k}
            }\, .
    \end{align}
    Let $d_0$ be the largest integer such that $2^{d_0} \le
    \frac{\abs{A}}{(2m)^{(k-1)/(k+1)}}$.
    Then:
    \begin{align}
        \notag
        \frac{2^{d_0/2} \sqrt{2m}}{\sqrt{\abs{A}}}
        &
        =
        \Theta \left ( m^{1/(k+1)} \right )
        \\
        \notag
        \sqrt{\frac{2m}{2^{d_0}} \cdot \abs{A}}^{1/k}
        &
        =
        \Theta \left ( m^{1/(k+1)} \right )\, .
    \end{align}
    Inserting this gives us:
    \begin{align}
        \notag
        \Sigma_1
        & \le
        k
        \sum_{i=1}^d
            \min \set {
                \frac{2^{i/2} \sqrt{2m}}{\sqrt{\abs{A}}},
                \sqrt{\frac{2m}{2^i} \cdot \abs{A}}^{1/k}
            }
        \\
        & \le
        k
        \sum_{i=-\infty}^{\infty}
            \min \set {
                \frac{2^{i/2} \sqrt{2m}}{\sqrt{\abs{A}}},
                \sqrt{\frac{2m}{2^i} \cdot \abs{A}}^{1/k}
            }
        \\
        & \le
        k
        \sum_{i=-\infty}^{d_0}
            \frac{2^{i/2} \sqrt{2m}}{\sqrt{\abs{A}}}
        +
        k
        \sum_{i=d_0}^{\infty}
            \sqrt{\frac{2m}{2^i} \cdot \abs{A}}^{1/k}
        \\
        & =
        O\left ( k m^{1/(k+1)} \right )
        \cdot
        \left (
            \sum_{i = 0}^{\infty} 2^{-i/2}
            +
            \sum_{i = 0}^{\infty} 2^{-i/k}
        \right )
        \\
        & =
        O \left ( k^2 m^{1/(k+1)} \right )\, .
    \end{align}
    Summarizing, we thus have that
    \[
        \sum_{i\ge 0} t_i = O\!\left(k^2\cdot m^{\frac{1}{k+1}}\right)\ ,
    \]
    and combining this with \eqref{eq:mainTechnicalSecondRewrite2},
    \eqref{eq:mainTechnicalSecondRewrite} and
    \eqref{eq:mainTechnicalFirstRewrite} now gives us the lemma.
\end{proof}

Using Lemma~\ref{lem:mainTechnical} above we are now ready to prove
Lemma~\ref{lem:main_kwalks} which we used to bound the number of
$\preceq$-capped $k$-walks in Section~\ref{sec:2kcycles}. The main idea in the
proof of Lemma~\ref{lem:main_kwalks} is to split the nodes $V$ into different
sets based on their degrees and then use Lemma~\ref{lem:mainTechnical} to bound
the $\snorm{\cdot}$-norm of the graphs induced by these sets individually.

\begin{proof}[Proof of Lemma~\ref{lem:main_kwalks}]
    Let $V_i$ be the set of nodes $u$ with $\deg(u) \in
    \left(2^{i-1},2^i\right]$, and let $V_{\le i} = \cup_{j \le i} V_j$ be the
    set of nodes with $\deg(u) \in (0,2^i]$. Let $G_{\le i} = (V, E \cap V_{\le
    i}^2)$ be the subgraph of $G$ induced by
    $V_{\le i}$. Note that $G_{\le i}$ here is defined
    slightly differently than we did in Section~\ref{sec:2kcycles} as we
    consider entire sets of nodes $V_i$. Any $\preceq$-capped $k$-walk starting in
    from a node $u \in V_{i}$ is contained in $X_{G_{\le i}}$. It follows by
    Lemma~\ref{lem:kwalks_set} that the total number of $\preceq$-capped
    $k$-walks in $G$ is bounded by
    \begin{align}
        \notag
        \sum_{i \ge 0}
            \sqrt{\abs{V_i}} \snorm{X_{G_i}^k \mathbf{1}}
        &\le
        \sum_{i \ge 0}
            \snorm{X_{G_i}}^{k-1}
            \sqrt{\abs{V_i}} \snorm{X_{G_i} \mathbf{1}}
        \\
        \label{eq:kwalks_bound}
        &\le
        \snorm{X_G}^{k-1}
        \sum_{i \ge 0}
            \sqrt{\abs{V_i}} \snorm{X_{G_i} \mathbf{1}}
        \, .
    \end{align}
        We note that $X_{G_i} \mathbf{1} \le \sum_{j \le i} 2^j \mathbf{1}_{V_j}$, and hence
    \begin{align}
        \notag
        \sum_{i \ge 0}
            \sqrt{\abs{V_i}} \snorm{X_{G_i} \mathbf{1}}
        &\le
        \sum_{i \ge 0}
            \sqrt{\abs{V_i}} \sum_{j \le i} \snorm{2^j \mathbf{1}_{V_j}}
        \\\notag &=
        \sum_{i \ge j \ge 0}
            \sqrt{\abs{V_i}}
            \cdot \sqrt{\abs{V_j}}
            \cdot 2^j\, .
    \end{align}
    We now note that
    \begin{align}
        \notag
        \sqrt{\abs{V_i}}
        \cdot \sqrt{\abs{V_j}}
        \cdot 2^j
        &=
        \sqrt{2^i\abs{V_i}}
        \cdot \sqrt{2^j\abs{V_j}}
        \cdot 2^{-(i-j)/2}
        \\\notag &\le
        \frac{2^i\abs{V_i} + 2^j\abs{V_j}}{2}
        \cdot
        2^{-(i-j)/2}
        \, ,
    \end{align}
    which implies that
    \begin{align}
        \notag
        \sum_{i \ge j \ge 0}
            \sqrt{\abs{V_i}}
            \cdot \sqrt{\abs{V_j}}
            \cdot 2^j
        &\le
        \sum_{i \ge j \ge 0}
            \frac{2^i\abs{V_i} + 2^j\abs{V_j}}{2}
            \cdot
            2^{-(i-j)/2}
        \\\notag &=
        \sum_{i \ge 0}
            2^i \abs{V_i}
            \sum_{\ell \ge 0} 2^{-\ell/2}
        \\\notag &=
        \frac{\sqrt{2}}{\sqrt{2}-1}
        \sum_{i \ge 0}
            2^i \abs{V_i}
        \, .
    \end{align}
    Since $\sum_{i \ge 0} 2^i \abs{V_i}$ is at most twice as large as the sum of
    degrees of the nodes in $G$ it is bounded by $4m$, and therefore
    \begin{align}
        \sum_{i \ge j \ge 0}
            \sqrt{\abs{V_i}}
            \cdot \sqrt{\abs{V_j}}
            \cdot 2^j
        \le
        4 \cdot \frac{\sqrt{2}}{\sqrt{2}-1} m
        <
        14m
        \, .
    \end{align}
    Combining this with \eqref{eq:kwalks_bound} and
    Lemma~\ref{lem:mainTechnical} we get that the number of $\preceq$-capped
    $k$-walks is at most
    \begin{align}
        \notag
        14 \snorm{X_G}^{k-1} m = O\!\left((k^2)^{k-1}m^{\frac{2k}{k+1}}\right)
        \, ,
    \end{align}
    which is what we wanted to show.
\end{proof}

Below we prove Lemma~\ref{lem:kwalks_lower}, which gives a lower bound on the number of capped $k$-walks.

\begin{proof}[Proof of Lemma~\ref{lem:kwalks_lower}]
Let $\Delta = \frac{m}{2n}$.
For a subgraph $F$ of $G$ we let $f(F)$ denote the subgraph $F'$ of $F$
obtained in the following way. Initially we let $F' = F$. As long as there
exists a node $v \in F'$ such that $\deg_{F'}(v) < \Delta$ we remove $v$
from $F'$. We continue this process until no node in $F'$ has fewer than
$\Delta$ neighbours and let $f(F) = F'$.

We now construct the sequences $(H_i)_{i \ge 0}, (H_i')_{i \ge 0}$ of subgraphs
of $G$ in the following manner. We let $H_0' = G$, and $H_0 = f(H_0')$. If $H_i$
is non-empty, let $v_i$ be the largest element in $H_i$, i.e. $v_i \succeq v$
for all $v \in H_i$, and define $H_{i+1}' = H_i \setminus \set{v_i}$. If $H_i$
is empty we let $H_{i+1}' = H_i$. In either case we let $H_{i+1} = f(H_{i+1}')$.

For all $i$ such that $H_i$ is non-empty, there exists at least \linebreak$\deg_{H_i}(v_i) \Delta^{k-1}$
capped $k$-walks $(x_1,\ldots,x_k)$ with $x_1 = v_i$. By the definition of $H_{i+1}'$
we have that $\deg_{H_i}(v_i) = \abs{E(H_i)} - \abs{E(H_{i+1}')}$. The total number
of capped $k$-walks in $G$ is therefore at least:
\begin{align}
	\label{eq:cappedKWalksNumberOfWalks}
	\sum_{i \ge 0}
		\left ( \abs{E(H_i)} - \abs{E(H_{i+1}')} \right )
		\Delta^{k-1}\, .
\end{align}
Now note that:
\begin{align}
    \notag
    &\quad  \sum_{i \ge 0}
		\left ( \abs{E(H_i)} - \abs{E(H_{i+1}')} \right )
    \\
	\label{eq:cappedKWalksSumOfEdges}
    &=
	\left (
		\sum_{i \ge 0}
			\abs{E(H_i')} - \abs{E(H_{i+1}')}
	\right )
	-
	\left (
		\sum_{i \ge 0}
			\abs{E(H_i')} - \abs{E(H_{i})}
	\right )
	\, .
\end{align}
The first sum on the right hand side of \eqref{eq:cappedKWalksSumOfEdges} is a
telescoping sum that is equal to $m$. The second sum on the right hand side of
\eqref{eq:cappedKWalksSumOfEdges} can be bounded by noting that
$\abs{E(H_i')} - \abs{E(H_{i})}$ is at most $\Delta \cdot \abs{V(H_i' \setminus f(H_i'))}$,
since applying $f$ to $H_i'$ removes $\abs{V(H_i' \setminus f(H_i'))}$ nodes, and each node
removed had degree at most $\Delta$.
Since at most $n$ nodes are removed in total the sum is bounded by $n\Delta$. Hence
\eqref{eq:cappedKWalksSumOfEdges} is at least $m-n\Delta = \frac{m}{2}$.
Inserting this into \eqref{eq:cappedKWalksNumberOfWalks} gives that the number of
capped $k$-walks is at least
\begin{align}
	\notag
	\frac{m}{2} \cdot \Delta^{k-1} =
	n \cdot \left ( \frac{m}{2n} \right )^k
	\, ,
\end{align}
as desired.
\end{proof}

\section{Hardness of finding cycles}\label{sec:clb}
Theorem~\ref{thm:2k_cycle} presents an algorithm with a seemingly natural
running time in terms of $m$ and $k$. A natural question to ask is
whether the exponent of $m$ has to increase with $k$ and,
perhaps more interestingly, what the correct
exponent is. In this section we address the possibility of faster
algorithms, by proving \Cref{thm:n2opt,thm:clbs} discussed in the
introduction.

\begin{proof}[Proof of \Cref{thm:clbs}]
    Let $G=(V,E)$ be the graph in which we wish to find a triangle with $|V| =
    n$ and $|E| = \Theta(n^2)$. By Conjecture~\ref{conj:comb_bmm} it takes
    $n^{3-o(1)}$ to find a triangle in $G$. Now create the graph $G'$
    consisting of three copies, $A$, $B$, and $C$, of $V$. Denote each copy of
    $u\in V$ in $A,B,C$ by $u_A,u_B,u_C$, respectively. For each edge $(u,v)\in
    E$ add the edges $(u_A,v_B)$, $(u_B,v_C)$, and $(u_C,v_A)$ to $G'$. It now
    follows that $G$ contains a triangle $u,v,w$ if and only if $G'$ contains a
    triangle $u_A,v_B,w_C$.

    Now Fix $x = \lceil (2k+1)/4 \rceil$ and note that $2k\ge 3x$ by the
    restrictions to $k$. Create the graph $G^e_k$ by taking a copy of $G'$ and
    performing the following changes: Replace each edge by a path of length
    $x$. If $2k > 3x$ replace each node $u_A$ in $G^e_k$ by a path
    $u_A^1,\ldots, u_A^{2k-3x+1}$. Otherwise if $2k=3x$ do nothing. We now
    claim that $G^e_k$ contains a $C_{2k}$ if and only if $G$ contains a
    triangle. Observe first, that if $G$ contains a triangle $u,v,w$, then
    $u_A^1\leadsto v_V\leadsto w_C\leadsto u_A^{2k-3x+1}\leadsto u_A^1$ is a
    cycle in $G^e_k$ and has length $3x + 2k-3x = 2k$. Now assume that $G^e_k$
    has a cycle of length $2k$. If this cycle contains two nodes $u_A^1$ and
    $v_A^1$ it must have length at least $4x > 2k$ and similar for $B$ and $C$
    and $u_A^{2k-3x+1}$ and $v_A^{2k-3x+1}$. Thus, the cycle must exactly be of
    the form $u_A^1\leadsto v_V\leadsto w_C\leadsto u_A^{2k-3x+1}\leadsto u_A^1$
    and such a cycle can only have length $2k$ if all edges $(u_A,v_B)$,
    $(v_B,w_C)$, and $(w_C,u_A)$ are present in $G'$. Now observe that for
    constant $k$ the graph $G^e_k$ has $N = \Theta(n^2)$ nodes and $M =
    \Theta(n^2)$ edges. It now follows from Conjecture~\ref{conj:comb_bmm} that
    no algorithm can detect a $C_{2k}$ in $G^e_k$ in time $O(M^{3/2-\eps}) =
    O(n^{3-\eps})$ for any $\eps > 0$.
\end{proof}
The reduction for Proposition~\ref{thm:clbs} is shown in Figure~\ref{fig:clbs}
below.
\begin{figure}[htbp]
    \centering
    \includegraphics[width=0.6\columnwidth]{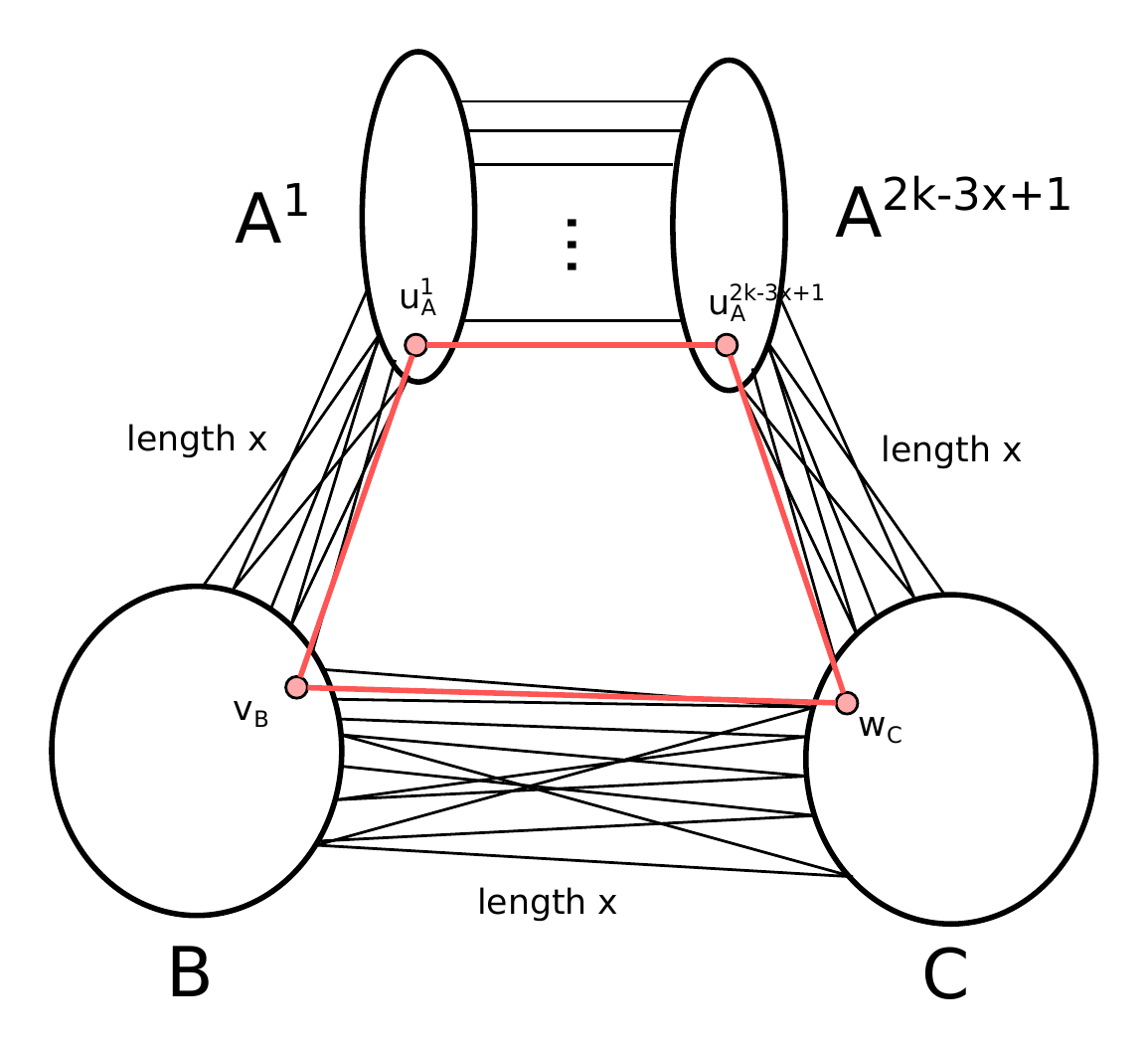}
    \caption{The construction of $G^e_k$ from the proof of Lemma~\ref{thm:clbs}
    and an example $2k$-cycle highlighted in red.}
    \label[figure]{fig:clbs}
\end{figure}

Finally, We show the ``conditional optimality'' stated in \Cref{thm:n2opt}. The theorem states that if $O(n^2)$ time is optimal, then our bound is the best that can be achieved.
\begin{proof}[Proof of \Cref{thm:n2opt}]
	Let $\eps > 0$ be given and let $\delta = \eps$.
	
	Assume there exists an algorithm
	which finds a $2k$-cycle in time $O(m^{2k/(k+1)-\eps})$. Now consider the
    following algorithm: If $m\ge 100k\cdot n^{1+1/k}$ answer yes, and
    otherwise run the given algorithm. This algorithm has running time
    $O(n^{(1+1/k)\cdot(2k/(k+1)-\eps)}) = o(n^{2-\delta})$.
    Hence part (1) holds.
    
    Now assume there are finitely many graphs $G$
    such that $\abs{E(G)} \ge \abs{V(G)}^{1+1/k-\eps}$. Then there must exist some constant $n_0$ such
    that no graph with $n\ge n_0$ nodes and $m\ge n^{1+1/k-\eps}$ edges
    contains a $2k$-cycle. Now consider the following algorithm: Let $G=(V,E)$
    be the graph we wish to detect a $C_{2k}$ in. If $|V| < n_0$ we can answer in
    constant time. If $|V| \ge n_0$ and $|E| \ge |V|^{1+1/k-\eps}$ answer no,
    and otherwise run the algorithm of \Cref{thm:2k_cycle} to detect a $C_{2k}$
    in time $O(|V|^{(1+1/k-\eps)\cdot 2k/(k+1)}) = o(|V|^{2-\delta})$.
    Hence part (2) holds.
\end{proof}

\section{Omitted proofs}\label{sec:proofs} 
This section contains missing proofs from Section~\ref{sec:kwalks}.

\begin{proof}[Proof of \Cref{lem:matrixSnormOnlyZeroOne}]
Let $v \in \R^n$ be a vector such that each entry either is contained in $\left[2^{-1},1\right]$
or is $0$. Let $r = \abs{\supp(v)}$ and write $v$ as $v = \sum_{i=1}^r \lambda_i e_i$ for vectors
$e_i$ such that for each $e_i$ there is a single entry $(e_i)_j = 1$ and all other entries
are $0$.
Let $X_1,\ldots,X_r$ be independent random variables $\in \set{0,1}$ such that $E(X_i) = \lambda_i$.
By the concavity of $\snorm{\cdot}$ we then have
\begin{align}
    \notag
	\snorm{Av}
    &=
	\snorm{E \left ( A \sum_{i=1}^r X_i e_i\right )}
    \le
	E \left ( \snorm{A \sum_{i=1}^r X_i e_i} \right )
    \\\notag &\le
	E \left ( C\snorm{\sum_{i=1}^r X_i e_i} \right )
    \le
	C\sqrt{r}
    \\
	\label{eq:vectorHalfToOneBound}
    &\le
	2C\snorm{v}
	\, .
\end{align}
Since $v$ was arbitrarily chosen \eqref{eq:vectorHalfToOneBound} holds for all vector
$v$ with entries in $\set{0} \cup [2^{-1},1]$.

Let $v \in \R^n$ be a vector where each entry is non-negative. We will show that
$\snorm{Av} \le 8C \snorm{v}$. For each integer $k$ let $v^{(k)} \in \R^n$ be the vector
containing the $i$'th entry of $v_i$ if $v_i \in (2^{k-1},2^k]$ and $0$ otherwise, i.e.
\begin{align}
	\notag
	v^{(k)}_i = \left [ v_i \in (2^{k-1},2^k] \right ] v_i
	\, .
\end{align}
Using the triangle inequality and \eqref{eq:vectorHalfToOneBound} on the vectors
$2^{-k}v^{(k)}$ now gives us
\begin{align}
    \notag
	\snorm{Av} =
	\snorm{\sum_{k} Av^{(k)}}
    &\le
	\sum_{k} 2^k \snorm{A 2^{-k}v^{(k)}}
    \\\notag &\le
	\sum_{k} 2^k \cdot 2C\snorm{2^{-k} v^{(k)}}
	\\ \label{eq:AvLessThanSum}
    &=
	2C\sum_{k} \snorm{v^{(k)}}
	\, .
\end{align}
Now we have that
\begin{align}
	\notag
	\sum_{k} \snorm{v^{(k)}}
    &=
	\sum_{k}
		\int_0^{2^k}
			\sqrt{\abs{\set{i \mid v^{(k)}_i \ge x}}}
			dx
	\\\notag & \le
	\sum_{k}
		2^k
		\sqrt{\abs{\set{i \mid v^{(k)}_i \ge 2^{k-1}}}}
	\\
	\notag
	& =
	4
	\sum_{k}
		\int_{2^{k-2}}^{2^{k-1}}
			\sqrt{\abs{\set{i \mid v^{(k)}_i \ge x}}}
			dx
	\\
	\label{eq:sumLessThanv}
	&
	\le
	4
	\sum_{k}
		\int_{2^{k-2}}^{2^{k-1}}
			\sqrt{\abs{\set{i \mid v_i \ge x}}}
			dx
	=
	4\snorm{v}\, .
\end{align}
Combining \eqref{eq:AvLessThanSum} and \eqref{eq:sumLessThanv} gives that $\snorm{Av} \le 8C\snorm{v}$
for every non-negative vector $v \in \R^n$ as desired.

Let $v \in \R^n$ be any real vector. Let $v^+$ and $v^-$ be defined by
\begin{align}
	\notag
	(v^+)_i = \max \set{v_i,0},
	\ \
	(v^-)_i = \max \set{-v_i,0}\, .
\end{align}
Then $v^+$ and $v^-$ have non-negative coordinates and $v = v^+ - v^-$. It is easy
to see that $\snorm{v} \ge \max\set{\snorm{v^+},\snorm{v^-}}$, and therefore:
$\snorm{v^+} + \snorm{v^-} \le 2\snorm{v}$. Now we get the result by the using the triangle
inequality:
\begin{align}
	\notag
    \snorm{Av} &=
	\snorm{Av^{+} - Av^{-}}
    \\\notag &\le
	\snorm{Av^{+}} + \snorm{Av^{-}}
    \\\notag &\le
	8C \left ( \snorm{v^+} + \snorm{v^{-}} \right )
    \\\notag &\le
	16C \snorm{v}
	\, .
\end{align}
It follows that $\snorm{A} \le 16C$.
\end{proof}

Below we show \Cref{lem:modifiedBS}, which can be seen as a modified version of the classic Bondy and Simonovits lemma, as we here argue about edges between any two subsets of the graph, instead of edges in the entire graph as in the original lemma \cite{BONDY197497}.
\begin{proof}[Proof of \Cref{lem:modifiedBS}]
    Let $m=|E(A,B)|$ and let $E = E(A,B)$.
    We will assume that $m \ge 100k\cdot(|A| + |B|)$ as the statement is
    otherwise trivially true. We will assume that the graph contains no
    $2k$-cycle and show that then $m\le 100k\cdot \sqrt{|A|+|B|}^{1+1/k}$.

    Let $2\alpha = \frac{m}{|A|}$ and let $2\beta =
    \frac{m}{|B|}$ be the average degrees of nodes in $A$ and $B$
    respectively when restricted to $E$. Recursively remove any node from
    $A$ respectively $B$ which does not have at least $\alpha$ respectively
    $\beta$ edges in $E$. Then we remove strictly less than $\alpha\cdot
    |A| + \beta\cdot |B| < m$ edges and thus have a non-empty graph
    left.

    Now fix some node $u\in A$ and let $L_0 = \{u\}$. Now define $L_{i+1}$ to be
    the neighbours of the nodes in $L_i$ using the edges of $E$ for
    $i=0,\ldots, k-1$. This gives us the sets $L_0,\ldots, L_k$.
    Note that if $A\cap B = \emptyset$ we have $L_i\cap L_{i+1} =
    \emptyset$ for each $i = 0,\ldots, k-1$. We will show by induction that
    $|L_i| \le |L_{i+1}|$ for each $i=0,\ldots, k-1$. This is clearly true for
    $i=0$ since $u$ has degree at least $\alpha\ge 50k$ by assumption. Now fix
    some $i \ge 1$ and assume that the statement is true for all $j < i$. We will
    assume that $i$ is even (the other case is symmetric). We know
    from~\cite{BONDY197497,YusterZ97} that
    \[
        |E(L_i,L_{i+1})| \le 4k\cdot (|L_i| + |L_{i+1}|)\ ,
    \]
    as otherwise we can find a $2k$-cycle. By the induction hypothesis this
    gives us
    \[
        |E(L_{i-1},L_i)| \le 8k\cdot |L_i|\ .
    \]
    Since $i$ is even we also know that
    \[
        \alpha\cdot |L_i| \le |E(L_{i-1},L_i)| + |E(L_i,L_{i+1})|\ ,
    \]
    and thus
    \[
        (\alpha - 8k)\cdot |L_i| \le |E(L_i,L_{i+1})| \le 4k\cdot (|L_i| +
        |L_{i+1}|)\ .
    \]
    This gives us that $(\alpha-12k) \le 4k\cdot |L_{i+1}|$, and it follows
    that
    \[
        |L_{i+1}| \ge \frac{\alpha-12k}{4k}\cdot |L_i|\ .
    \]
    By our assumption on $\alpha$ this proves that $|L_{i+1}| \ge L_i$. When
    $i$ is odd the same argument gives us that $|L_{i+1}| \ge
    \frac{\beta-12k}{4k}\cdot |L_i|$.

    By the above discussion it follows that
    \begin{align*}
        |L_k| &\ge \left(\frac{\alpha -
        12k}{4k}\right)^{\ceil{k/2}}\cdot\left(\frac{\beta-12k}{4k}\right)^{\floor{k/2}}
        \\ &\ge \frac{\alpha^{\ceil{k/2}}\beta^{\floor{k/2}}}{(8k)^k}\ ,
    \end{align*}
    where the last inequality follows by our assumption the $\alpha,\beta \ge
    50k$. Assume now that $k$ is odd (as the even case is handled similar). It
    then follows that
    \[
        |B| \ge |L_k|
        \ge \frac{\alpha^{\ceil{k/2}}\beta^{\floor{k/2}}}{(8k)^k}\ ,
    \]
    and a symmetric argument gives us
    \[
        |A| \ge \frac{\alpha^{\floor{k/2}}\beta^{\ceil{k/2}}}{(8k)^k}\ ,
    \]
    implying that
    \[
        \sqrt{|A|\cdot |B|} \ge \frac{\sqrt{\alpha\beta}^k}{(8k)^k}
        = \frac{\sqrt{\frac{m^2}{4|A||B|}}^k}{(8k)^k}\ .
    \]
    Now taking the $k$th root and isolating $m$ yields exactly the bound we
    wanted to show
    \[
        m\le 16k\cdot\sqrt{|A||B|}^{1+1/k}\ .
    \]

    In the above proof we assumed that $A$ and $B$ were disjoint in order to
    apply the lemma of~\cite{BONDY197497,YusterZ97}. Now observe that if this
    is not the case we can pick subsets $A'\subseteq A$ and $B'\subseteq B$
    with $A'\cap B' = \emptyset$ and $E(A',B') \ge m/2$ and the argument now
    follows through. 
\end{proof}

\bibliographystyle{plain}
\bibliography{cycles}

\begin{thebibliography}{10}

\bibitem{AbboudW14}
Amir Abboud and Virginia~Vassilevska Williams.
\newblock Popular conjectures imply strong lower bounds for dynamic problems.
\newblock In {\em Proc. 55th IEEE Symposium on Foundations of Computer Science
  (FOCS)}, pages 434--443, 2014.

\bibitem{AlonYZ95}
Noga Alon, Raphael Yuster, and Uri Zwick.
\newblock Color-coding.
\newblock {\em Journal of the ACM}, 42(4):844--856, 1995.
\newblock See also STOC'94.

\bibitem{AlonYZ97}
Noga Alon, Raphael Yuster, and Uri Zwick.
\newblock Finding and counting given length cycles.
\newblock {\em Algorithmica}, 17(3):209--223, 1997.
\newblock See also ESA'94.

\bibitem{BONDY197497}
John~A. Bondy and Miklós Simonovits.
\newblock Cycles of even length in graphs.
\newblock {\em Journal of Combinatorial Theory, Series B}, 16(2):97 -- 105,
  1974.

\bibitem{EISENBRAND200313}
Friedrich Eisenbrand and Fabrizio Grandoni.
\newblock Detecting directed 4-cycles still faster.
\newblock {\em Information Processing Letters}, 87(1):13 -- 15, 2003.

\bibitem{DBLP:journals/siamcomp/ItaiR78}
Alon Itai and Michael Rodeh.
\newblock Finding a minimum circuit in a graph.
\newblock {\em SIAM Journal on Computing}, 7(4):413--423, 1978.

\bibitem{Karp1972}
Richard~M. Karp.
\newblock Reducibility among combinatorial problems.
\newblock In {\em Proceedings of a symposium on the Complexity of Computer
  Computations}, pages 85--103, 1972.

\bibitem{LeGall:2014:PTF:2608628.2608664}
Fran\c{c}ois Le~Gall.
\newblock Powers of tensors and fast matrix multiplication.
\newblock In {\em Proc. 39th International Symposium on Symbolic and Algebraic
  Computation}, ISSAC '14, pages 296--303, 2014.

\bibitem{Monien85}
Burkhard Monien.
\newblock How to find long paths efficiently.
\newblock {\em Annals of Discrete Mathematics}, 25:239--254, 1985.

\bibitem{RodittyW12}
Liam Roditty and Virginia~Vassilevska Williams.
\newblock Subquadratic time approximation algorithms for the girth.
\newblock In {\em Proc. 23rd ACM/SIAM Symposium on Discrete Algorithms (SODA)},
  pages 833--845, 2012.

\bibitem{Strassen69}
Volker Strassen.
\newblock Gaussian elimination is not optimal.
\newblock {\em Numerische Mathematik}, 13(4):354--356, August 1969.

\bibitem{WilliamsW10}
Virginia~Vassilevska Williams and Ryan Williams.
\newblock Subcubic equivalences between path, matrix and triangle problems.
\newblock In {\em Proc. 51st IEEE Symposium on Foundations of Computer Science
  (FOCS)}, pages 645--654, 2010.

\bibitem{Yu2015}
Huacheng Yu.
\newblock An improved combinatorial algorithm for boolean matrix
  multiplication.
\newblock In {\em Proc. 42nd International Colloquium on Automata, Languages
  and Programming (ICALP)}, pages 1094--1105, 2015.

\bibitem{YusterZ97}
Raphael Yuster and Uri Zwick.
\newblock Finding even cycles even faster.
\newblock {\em SIAM Journal on Discrete Mathematics}, 10(2):209--222, 1997.
\newblock See also ICALP'94.

\bibitem{Yuster:2004:DSD:982792.982828}
Raphael Yuster and Uri Zwick.
\newblock Detecting short directed cycles using rectangular matrix
  multiplication and dynamic programming.
\newblock In {\em Proc. 15th ACM/SIAM Symposium on Discrete Algorithms (SODA)},
  SODA '04, pages 254--260, 2004.

\end{thebibliography}

\end{document}